\newtheorem{theorem}{Theorem}
\newtheorem{lemma}{Lemma}
\newtheorem{corollary}{Corollary}
\newenvironment{remark}[1][Remark]{\begin{trivlist}
\item[\hskip \labelsep {\bfseries #1}]}{\end{trivlist}}
\begin{document}

\title{On Optimum Causal Cognitive Spectrum Reutilization Strategy}
\author{Kasra Haghighi, Erik G. Str\"{o}m and Erik Agrell
\thanks{The authors are with Chalmers Univ. of Technology, Sweden (email: kasra.haghighi@chalmers.se). Research supported by the High Speed Wireless Communication Center, Lund and the Swedish Foundation for Strategic Research. This work has been submitted to the IEEE for possible publication. Copyright may be transferred without notice, after which this version may no longer be accessible. }
}
\maketitle
\begin{abstract}
In this paper we study opportunistic transmission strategies for cognitive radios (CR) in which causal noisy observation from a primary user(s) (PU) state is available. PU is assumed to be operating in a slotted manner, according to a two-state Markov model. The objective is to maximize utilization ratio (UR), i.e., relative number of the PU-idle slots that are used by CR, subject to interference ratio (IR), i.e., relative number of the PU-active slots that are used by CR, below a certain level. We introduce an a-posteriori LLR-based cognitive transmission strategy and show that this strategy is optimum in the sense of maximizing UR given a certain maximum allowed IR. Two methods for calculating threshold for this strategy in practical situations are presented. One of them performs well in higher SNRs but might have too large IR at low SNRs and low PU activity levels, and the other is proven to never violate the allowed IR at the price of a reduced UR. In addition, an upper-bound for the UR of any CR strategy operating in the presence of Markovian PU is presented. Simulation results have shown a more than 116\% improvement in UR at SNR of $-3$dB and IR level of 10\% with PU state estimation. Thus, this opportunistic CR mechanism possesses a high potential in practical scenarios in which there exists no information about true states of PU.
\end{abstract}

\section{Introduction}
\IEEEPARstart
The limited availability of radio spectrum, together with the ever increasing demands for data rates, has created a big challenge for spectrum regulators, manufacturers and operators as they need to meet the demand. Modulation and coding are approaching the Shannon limits, which makes the higher spectral efficiencies theoretically impossible\cite{dohler2011}. On the other hand, the hardware impairments including but not limited to power amplifiers nonlinearities, analog to digital conversion issues and phase noise limit the efficient use of frequency bands. Although the usable spectrum is limited, Federal Communications Commission (FCC) studies have shown that the spectrum is severely underutilized\cite{FCC_spectrum}. More specifically, studies have shown that the utilization of the spectrum in different geographical areas varies significantly. For example, fading in primary wireless channels creates spatial spectrum holes which can be exploited by secondary users\cite{Molisch2009,Haykin2005}. The introduction of software defined radio is an enabling technology for the dynamic spectrum access\cite{Haykin2005,Mitola1993}, which motivates the reuse of the unhindered spectrum. The concept of cognitive radio (CR) as defined first by J. Mitola\cite{Mitola1993} entails that the communication devices adapt themselves to the spectrum\cite{Haykin2005}.\\
\indent In the context of CR, spectrum sensing plays a crucial role for the cognition phase. Since the spectrum sensing is affected by the type of signal detectors e.g., energy detectors, match filter detectors, cyclostationary feature detectors, wavelet feature detectors, etc., the measure of the performance of a CR is normally based on the performance of its spectrum sensor\cite{Yecek2009}. Usually, detectors and spectrum sensing algorithms are characterized by their probabilities of mis-detection and false-alarm\cite{Poor1994}\cite{Yecek2009}. However, the obvious choice of using these probabilities might not the best choice to serve the purpose of cognition and adaptation of CRs. These two probabilities  carry information only about a detector and not the interaction between the primary user of the band and CR transmission strategies. Some researchers approached performance evaluation of CRs from the capacity point of view\cite{Haddad2007}, which is valid with a sophisticated channel code and a large block length (delay). Thus, a need for proper measures for evaluating the performance of cognitive radios (networks) emerges.\\
\indent In the traditional implementations of CR, in which only the current sensed received signal is considered for the transmission decision in the succeeding time slots, the important fact that the PU traffic might be according to a certain model is ignored. CR also expects that its observation resembles the true transmission state of PU, and PU will not change its state in the period of CR transmission. Clearly, since this CR does not incorporate the PU transmission model in its decision, the performance of CR will improve if the CR decision algorithm includes such a model.
 This will require a beyond-PHY or cross-layer design. Thus, integrating the PU model into CR transmission strategy will enable CR to have credible prediction of PU states.\\
\indent In information theory literature, normally it is assumed that the CR(s) have non-causal information about PU(s) activities through a process called genie\cite{Devroye2006}. However, in practical applications this assumption does not hold. Many researchers use only the current state of PU for transmission in the slot.\\
\indent In addition, CRs suffer from other problems. The capabilities of CRs utilizing energy detection spectrum sensing is limited by the SNR wall\cite{tandra2008}. This is due to the low received power of the PU signal at the CR receiver and uncertainties in signals, noise, and channels. This effect is more visible\cite{haghighi2011,haghighi20113} in wideband spectrum sensing in particular. This can ultimately result in large sensing delays. Nevertheless, spectrum opportunities appear and disappear quickly, and they depend on the occupancies in different bands. A real cognitive radio, which, according to the cognitive cycle\cite{Mitola1993,Akyildiz2006} should adapt itself to the dynamics of the spectrum, needs to be agile to react to the changes in the spectrum\cite{Chou2007} as fast as possible. On the other hand, in some cases such as energy detectors, agility compromises the accuracy of sensing the spectrum, which ultimately jeopardizes not only interference level made for PU but also reduces the spectrum reuse. Thus, a CR which can optimally incorporate all previous observations and thus decides for transmission within a short time is appealing. Sequential spectrum sensing has been proven to be on average faster than traditional energy detection\cite{Poor1994,poor2009,haghighi2009,haghighi2010}. However, since detection time varies in sequential detection, it is not a good candidate for slotted CR strategy.\\
\indent In this manuscript we deploy a hidden Markov model (HMM) to form a framework for modeling the behavior of CRs in the presence of PUs and all the uncertainties. Additionally, a benchmark for evaluation of CR performance is introduced. Then, using this foundation and these measures, a new CR transmission strategy is designed and implemented. This new design ensures that the vacant spectrum is optimally used conditioned on the level of interference for the PU, because of all uncertainties in the model, is not exceeding a certain level.\\
\indent HMMs are long in use for modelling different phenomena ranging from speech signals\cite{Rabinner1989} to the complex behavior of computer networks. In the context of cognitive radio, many researchers model the spectrum white space with Markov models and spectrum sensing using HMMs\cite{ChenZ2009,ghosh2010,Coulson2009,Akbar2007,XiaoLZ09,Li2009,zhang2011}. In our paper, HMMs are used not only for spectrum sensing but also as a tool for CR transmission strategy making. The closest published approach to our method is presented in~\cite{Zhao2007,Choi2011}, which employs a partially observed Markov decision process. They used this process for optimal policy making for multiple channel sensing and access. The approach is similar to ours due to the Markovian assumption for the PU transmission model and in the presence of sensing errors. However, the sensing model, performance metric, and constraints are different from ours.\\
\indent To summarize the contributions of this paper following items can be enlisted
\begin{itemize}
\item A new performance measure for characterizing CR performance is introduced
\item A novel APP-LLR based opportunistic spectrum reutilization strategy is proposed
\item Optimality of this new strategy is proved
\item Two practical methods for calculating threshold for APP-LLR based strategy are introduced, one is suitable for high-SNR regimes and presents close to optimum URs but IR may be too high at low SNR. The other never violates the allowed IR level, but with a reduced UR,
\item An upper bound on the UR for any CR transmission strategy is established.
\end{itemize}

\section{System model}
\indent This section presents the model which accounts for the PU signal and noise. First, a more general perspective is demonstrated and then a simplified version will be used.
\subsection{Complete PU transmission model}
\indent A cognitive radio system is designed to utilize the spectrum vacancies. To take advantage of time-frequency slots which are not used by the PU, the CR must be aware of the PU activities. In this paper, it is assumed that the CR has a full buffer to reuse the spectrum whenever it is available.\\
\indent CR will receive the PU signal which is attenuated by the PU-CR channel. If there exists more than one PU in the vicinity of the CR, the aggregated signal will be received by the CR antenna. It is possible to assume that the PUs operating in the same frequency band and co-located, belong to the same network and thus from the CR point of view can be modelled as a single entity. Since protection of each one of the PUs is as important as the others, a network of PUs for CR can be represented by a single but more active PU, although this would yield a suboptimal CR performance compared with a multi-PU model.\\
\indent Another factor in modelling the PU-CR interaction is the channel in between. Wireless channels are normally considered as random fading processes such as Rayleigh, Rician, Nakagami, etc\cite{Digham2007,Atapattu2009}. For simplicity it can be assumed that the fading gain is constant and known during the operation of this CR. Another approach to model the fading process is to include the fading in the PU transmission model. Thus, whenever channel is in a deep fade, it is assumed that there is no PU transmission, no matter what the real state of the PU is. And in case of no deep fade, the standard PU transmission model will be deployed. With this brief introduction, a simple two-state Markov model can approximate a wide range of PU transmissions, PU network activities and even fading channels.
In the next section, the simplified two-state Markov model will be presented as the PU transmission model.
\subsection{Simplified PU transmission model}\label{sys_model}
\indent Now, the PU transmissions are assumed to be slotted, since in most of today's digital communication systems transmissions are confined within a packet, frame or generally some block structure of some minimum length $T_{\text{F}}$. However, the CR is expecting PU activities and vacancies in much smaller slots of length $T\ll T_{\text{F}}$. Smaller slot size improves the agility of CR to adapt its transmission to the PU activity. For the sake of simplicity, we will assume that the CR slots are synchronized to the PU slots. However, because of the small CR slot length in comparison to the PU slot length, mismatches in synchronization will not cause major performance degradation. The existence of a PU transmission in slot $k$ i.e., during time $t \in [kT, (k+1)T)$, is denoted by the hypothesis $H_1\triangleq\{q_k=1\}$ and its absence is denoted by $H_0\triangleq\{q_k=0\}$. A simple model which represents the PU transmission is the two-state on-off Markov process depicted in Fig. \ref{ghmm}, where the Markov chain is represented by the transition probabilities $a_{i,j}=\Pr\{q_{k+1}=j|q_k=i\}>0$ for $i,j\in\{0,1\}$ and $q_k$ stands for the PU state at time slot $k$. The transition matrix is
 \begin{align}
\mathbf{A}\triangleq\left[
\begin{array}{c c}
a_{00} & a_{01} \\
a_{10} & a_{11}
\end{array}\right]\label{Amat},~a_{00}+a_{01}=a_{10}+a_{11}=1.
 \end{align}
\indent The initial distribution of the states is assumed to be in a steady state \cite{Rabinner1989} and defined as
 \begin{align}
\boldsymbol{\pi}&\triangleq\left[
\begin{array}{c c}
\pi_0 & \pi_1
\end{array}\right]\triangleq\left[
\begin{array}{c c}
\Pr\{q_k=0\} & \Pr\{q_k=1\}
\end{array}\right]=\left[
\begin{array}{c c}
\frac{a_{10}}{a_{01}+a_{10}} & \frac{a_{01}}{a_{01}+a_{10}}
\end{array}\right],~k=0,1,2,\cdots\label{pi}
\end{align}
It is assumed that the PU activities happen with a period $T_{\text{F}}$ larger than the period CR Markov chain operating on $T$. Thus, the chance of staying in one state or another is much higher than the chance of transition between states. This allows us to assume that $a_{01}+a_{10}< 1$, which turns out to be useful in Section \ref{nopustate}.
\subsection{Signal and noise model}\label{signoise}
\begin{figure}[!t]
  \centering
  \psfrag{p}[][][2]{$a_{01}$}
  \psfrag{q}[][][2]{$a_{10}$}
  \psfrag{p2}[][][2]{$a_{00}$}
  \psfrag{q2}[][][2]{$a_{11}$}
  \psfrag{h0}[][][2]{$q_k=0$}
  \psfrag{h1}[][][2]{$q_k=1$}
  \scalebox{.4}{\includegraphics{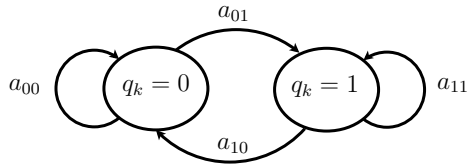}}
  \caption{PU transmission model}
 \label{ghmm}
\end{figure}
\indent The receiver front end is an energy detector whose output is $y_k\triangleq\sum_{i=0}^{K-1}\left|r\left(kT+iT_s\right)\right|^2$, where $r(\cdot)$ is the complex envelope of the received signal low-pass filtered to the PU signal bandwidth $W$, $T$ is the period in which energy is collected, $T_s$ is the sampling time, and $K$ is the total number of samples in each period. We assume that the received PU signal can be modelled as a Gaussian random process. The Gaussian PU signal model is common in literature\cite{Lai2008}\cite{Haykin2005}, and is reasonable for many combinations of PU signal formats and channels (fading as well as nonfading).
If we select $T_s$ such that $T_s \gg 1/W$, then the samples $r(iT_s)$ are approximately statistically independent. We note that $K$ is constrained as $K \leq T/T_s$.\\
\indent Since noise and channel uncertainty exists in the CR observation of the PU signal, the true PU state from Fig.~\ref{ghmm} is not observable. Depending on the state of the PU a continuous energy level which consists of noise only or signal plus noise is observed. This model corresponds to a continuous-output HMM depicted in Fig. \ref{xhmm_fix}.

\subsubsection{Noise only}
\indent In state $H_{0}$, the noise $n(iT_s)\sim \mathcal{CN}(0,\sigma^2_0)$ is a zero-mean complex circular Gaussian ($\mathcal{CN}$ stands for complex circular Gaussian) sample with variance $\sigma^2_0$, and the received signal will be $r(iT_s)=n(iT_s)$. Thus, $y_k$ is chi-square distributed with $2K$ degrees of freedom and Gaussian variance $\sigma^2_0/2$.
\subsubsection{Signal plus noise}
\indent In state $H_{1}$, the noise is a zero-mean complex circular Gaussian sample with variance $\sigma^2_0 $, the signal is also zero-mean complex circular Gaussian with variance $\sigma^2_s$, and $r(iT_s)=s(iT_s)+n(iT_s)$, $r(iT_s) \sim \mathcal{CN}(0,\sigma^2_1)$, where $\sigma^2_1=\sigma^2_s+\sigma ^2_0$. Thus, $y_k$ is chi-square distributed with $2K$ degrees of freedom and Gaussian variance $\sigma^2_1/2$.
\begin{figure}[t]
  \centering
  \psfrag{p}[][][2]{$a_{01}$}
  \psfrag{q}[][][2]{$a_{10}$}
  \psfrag{p2}[][][2]{$a_{00}$}
  \psfrag{q2}[][][2]{$a_{11}$}
  \psfrag{v0}[][][2]{$\times$}
  \psfrag{v1}[][][2]{$\times$}
  \psfrag{nn}[][][1.7]{$\mathcal{CN}(0,\sigma^2_0)$}
  \psfrag{sn}[][][1.7]{$\mathcal{CN}(0,\sigma^2_1)$}
  \psfrag{h0}[][][2]{$q_k=0$}
  \psfrag{h1}[][][2]{$q_k=1$}
  \psfrag{r}[][][2]{$r(t)$}
  \psfrag{x}[][][2]{$y_{k}$}
  \psfrag{ED}[][][2]{$\sum|r(.)|^2$}
  \scalebox{.4}{\includegraphics{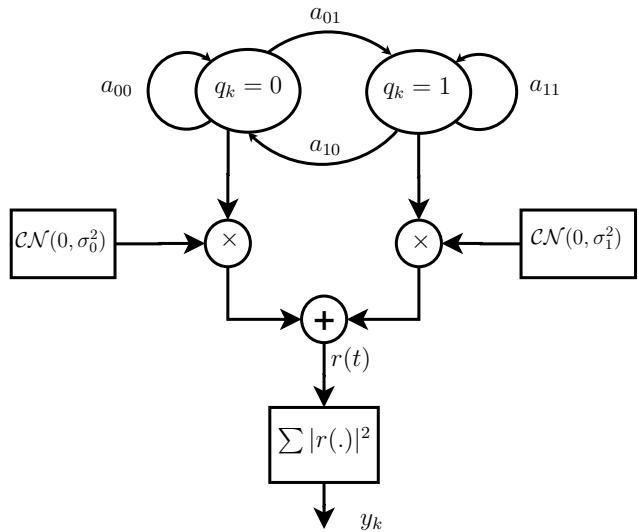}}
  \caption{Continuous-output HMM of received signal at CR}
 \label{xhmm_fix}
\end{figure}

\section{Problem statement and performance metrics}\label{SUandIR}
\indent Cognitive radios exploit channel availability information from spectrum sensing and decide whether to transmit or not. In this paper we assume that the CR has a full buffer to transmit. Thus, it would like to take advantage of any spectral opportunities and transmit whenever possible. However, due to channel and noise uncertainties it will create unintentional interference for PU.  Our goal is to design the best CR transmission strategy denoted by $u_{k+1}$, where $u_{k+1}=0$ and $u_{k+1}=1$ represent no transmission and transmission, respectively in slot $k+1$ using the observations until time $k$, $\mathbf{y}_k\triangleq\left[y_1,y_2,\ldots,y_k\right]^T$. This strategy is supposed to not interfere with the PU more than specific limit.\\

\indent The performance of a CR is usually assessed based on its spectrum-sensing algorithm. Spectrum sensing is judged based on its probability of false-alarm $P_{\text{FA}}$ and probability of mis-detection $P_{\text{M}}$, which are normally presented in receiver operating characteristic plots. However, the ultimate goal of CRs is to reutilize the idle spectrum slots while keeping the level of interference for PUs below a certain level. The two aforementioned measures are not taking the PU behavior into account. Moreover, utilization and interference are defined by the presence or absence of PU transmission. Therefore, it is advantageous to define new criteria which consider the full picture including PUs, CRs, and even the channel.
\subsection{Definitions}
\indent Interference will happen whenever the CR transmits at the same time as the PU. Thus, the interference ratio (IR) $\rho$ is defined as\cite{Haghighi20112}
\begin{align}
\rho\triangleq\Pr\{u_{k+1}=1|q_{k+1}=1\}.\label{IR1}
\end{align}
Utilization of the spectrum occurs whenever the CR transmits in a vacant time--frequency slot. Thus, we define the spectral utilization ratio (UR) as
\begin{align}
\eta\triangleq\Pr\{u_{k+1}=1|q_{k+1}=0\}.\label{SU1}
\end{align}
\indent The intention of any CR is to design a strategy that keeps $\rho$ below a specified level, say $\rho_{\text{max}}$, and then maximizes the utilization ratio $\eta$. Hence, we call a transmission scheme that maximizes $\eta$ while $\rho\leq\rho_{\text{max}}$ an optimal transmission scheme for a given $a_{01}$ and $a_{10}$. The relation of UR and IR to the transmission rate and the probability of error of CR appeared in\cite{Haghighi20112}. The following theorem states that the UR and IR depends on the PU $\mathbf{A}$,  $P_{0}\triangleq\Pr\{u_{k+1}=0|q_k=0\}$ and $P_{1}\triangleq\Pr\{u_{k+1}=1|q_k=1\}$.
\begin{theorem}\label{theory1}
Assume that PU follows the Markov model presented in Fig. \ref{ghmm}. For any CR, the UR and the IR are given by
\begin{align}
\eta&=a_{01}P_{1}+a_{00}(1-P_{0}),\label{SU3}\\
\rho&=a_{11}P_{1}+a_{10}(1-P_{0}).\label{IR3}
\end{align}
\end{theorem}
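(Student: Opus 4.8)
The plan is to compute each conditional probability directly from its definition as a ratio of a joint probability to a marginal, and then to marginalize the joint probability over the previous PU state $q_k$. For the UR I would write $\eta = \Pr\{u_{k+1}=1 \mid q_{k+1}=0\} = \Pr\{u_{k+1}=1, q_{k+1}=0\}/\Pr\{q_{k+1}=0\}$ and expand the numerator by the law of total probability as $\sum_{i\in\{0,1\}} \Pr\{u_{k+1}=1, q_{k+1}=0, q_k=i\}$. The denominator is just the steady-state value $\pi_0$ from (\ref{pi}), since the chain is initialized in steady state and hence $\Pr\{q_{k+1}=0\}=\pi_0$. Exactly the same template applies to $\rho$, with $q_{k+1}=0$ replaced by $q_{k+1}=1$ and $\pi_0$ replaced by $\pi_1$.

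The crucial step, and the one I expect to be the main obstacle to state rigorously, is the conditional independence of the transmission decision and the next PU state given the current PU state. Since $u_{k+1}$ is by construction a (possibly randomized) function of the observation vector $\mathbf{y}_k=[y_1,\ldots,y_k]^T$, and since in the HMM of Fig.~\ref{xhmm_fix} each observation $y_j$ is emitted from the state $q_j$ with $j\le k$, the Markov property d-separates $q_{k+1}$ from $(y_1,\ldots,y_k)$ once we condition on $q_k$. Thus, conditioned on $q_k=i$, the decision $u_{k+1}$ and the future state $q_{k+1}$ are independent, and I would record this as $\Pr\{u_{k+1}=1, q_{k+1}=j \mid q_k=i\} = \Pr\{u_{k+1}=1 \mid q_k=i\}\,\Pr\{q_{k+1}=j \mid q_k=i\} = \Pr\{u_{k+1}=1 \mid q_k=i\}\,a_{ij}$, the last equality being the definition of the transition probabilities.

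With this factorization the remainder is bookkeeping. I would substitute $\Pr\{u_{k+1}=1 \mid q_k=1\}=P_1$ and $\Pr\{u_{k+1}=1 \mid q_k=0\}=1-P_0$ (the latter being the complement of $P_0=\Pr\{u_{k+1}=0 \mid q_k=0\}$), which gives the UR numerator $\pi_0 a_{00}(1-P_0) + \pi_1 a_{10} P_1$ and, analogously with $a_{00},a_{10}$ replaced by $a_{01},a_{11}$, the IR numerator $\pi_0 a_{01}(1-P_0) + \pi_1 a_{11} P_1$. Dividing the first by $\pi_0$ and the second by $\pi_1$, and collapsing the surviving $\pi$ factors via the steady-state ratios $\pi_1/\pi_0 = a_{01}/a_{10}$ and $\pi_0/\pi_1 = a_{10}/a_{01}$ read off from (\ref{pi}), yields exactly $\eta = a_{01}P_1 + a_{00}(1-P_0)$ and $\rho = a_{11}P_1 + a_{10}(1-P_0)$, as claimed in (\ref{SU3}) and (\ref{IR3}).
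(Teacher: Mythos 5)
Your derivation is correct: the law of total probability over $q_k$, the steady-state marginals from (\ref{pi}), and the key conditional independence $u_{k+1}\perp q_{k+1}\mid q_k$ (which holds because $u_{k+1}$ is a function of $\mathbf{y}_k$ and the HMM's Markov property separates the past emissions from $q_{k+1}$ given $q_k$) combine to give exactly (\ref{SU3}) and (\ref{IR3}) after the $\pi_1/\pi_0=a_{01}/a_{10}$ cancellation. The paper itself supplies no in-line proof, deferring to \cite[Th.~1]{Haghighi20112}, but your argument is the natural self-contained derivation of this statement and correctly isolates the conditional-independence step as the only non-trivial ingredient.
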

\begin{proof}
Proof for a similar theorem is presented in\cite[Th. 1]{Haghighi20112}.
\end{proof}
\begin{remark}
If we set $u_{k+1}=\overline{\hat{q}_k}$, where $\hat{q}_k$ is an estimate of PU state $q_k$ and $\overline{\cdot}$ denotes negation, $P_0$ is the false-alarm probability and $P_1$ is the probability of missed detection for $\hat{q}_k$.
\end{remark}
\subsection{Bound for the performance of cognitive transmission strategies}
\begin{theorem}\label{theory2}
For any CR that satisfies $\rho\leq \rho_{\text{max}}$,
\begin{align}
\eta\leq \eta_{\text{max}}\triangleq
           \begin{cases}
             \rho_{\text{max}}+(1-a_{01}-a_{10})\min\{\frac{\rho_{\text{max}}}{a_{10}},\frac{1-\rho_{\text{max}}}{a_{11}}\}, & \text{if $a_{01}+a_{10}\leq 1$;} \\
             \rho_{\text{max}}-(1-a_{01}-a_{10})\min\{\frac{\rho_{\text{max}}}{a_{11}},\frac{1-\rho_{\text{max}}}{a_{10}}\}, & \text{if $a_{01}+a_{10}> 1$,}
\end{cases}\label{etaopt}
\end{align}
\end{theorem}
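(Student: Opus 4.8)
The plan is to recognize that, by Theorem~\ref{theory1}, both $\eta$ and $\rho$ are affine functions of the two strategy parameters $P_0$ and $P_1$, each of which is a conditional probability and hence confined to $[0,1]$. Writing $x\triangleq P_1$ and $y\triangleq 1-P_0$, we have $\eta=a_{01}x+a_{00}y$ and $\rho=a_{11}x+a_{10}y$ with $(x,y)\in[0,1]^2$. Since the stated bound is a converse, it suffices to maximize the linear objective $\eta$ over all $(x,y)$ in the unit square satisfying the single linear constraint $\rho\le\rho_{\text{max}}$, irrespective of whether every such pair is realizable by an actual strategy; any feasible CR can only do worse. Thus the problem reduces to a two-variable linear program whose optimum is attained at a vertex of the feasible polygon.

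First I would observe that the objective gradient $(a_{01},a_{00})$ lies in the open positive quadrant, so the point $(x,y)=(1,1)$---which gives $\rho=a_{11}+a_{10}=1$---is infeasible for any $\rho_{\text{max}}<1$, and the optimum therefore lies on the active constraint $\rho=\rho_{\text{max}}$. Moving along this line in the direction $(a_{10},-a_{11})$ changes $\eta$ at the rate $a_{01}a_{10}-a_{00}a_{11}=-(1-a_{01}-a_{10})$; the sign of this rate, governed entirely by $\delta\triangleq 1-a_{01}-a_{10}$, dictates toward which corner of the square the optimum slides and is exactly what separates the two cases in \eqref{etaopt}.

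When $a_{01}+a_{10}\le 1$ (so $\delta\ge 0$), $\eta$ increases as $x$ decreases and $y$ increases, so the optimum is the intersection of $\rho=\rho_{\text{max}}$ with either the edge $x=0$ or the edge $y=1$, whichever is reached first; the two candidate vertices are $(0,\rho_{\text{max}}/a_{10})$ and $((\rho_{\text{max}}-a_{10})/a_{11},1)$. When $a_{01}+a_{10}> 1$ the direction reverses and the relevant edges are $y=0$ and $x=1$, giving $(\rho_{\text{max}}/a_{11},0)$ and $(1,(\rho_{\text{max}}-a_{11})/a_{10})$. In each case I would substitute the vertex into $\eta=a_{01}x+a_{00}y$ and use $a_{00}=1-a_{01}$, $a_{11}=1-a_{10}$, and $a_{00}a_{11}-a_{01}a_{10}=\delta$ to reduce the two candidate values to $\rho_{\text{max}}+\delta\,\rho_{\text{max}}/a_{10}$ and $\rho_{\text{max}}+\delta(1-\rho_{\text{max}})/a_{11}$ (and analogously for the second case), matching \eqref{etaopt}.

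The remaining---and most delicate---step is to justify the $\min$ in \eqref{etaopt}: I must show that the feasibility condition selecting which edge is hit first coincides with which argument of the $\min$ is smaller. This follows from the equivalence $\rho_{\text{max}}\le a_{10}\iff \rho_{\text{max}}/a_{10}\le(1-\rho_{\text{max}})/a_{11}$, obtained by cross-multiplying and using $a_{10}+a_{11}=1$ (and the analogous equivalence with $a_{11}$ in the second case). Because $\delta$ multiplies the $\min$ with the correct sign in each branch, the smaller argument is exactly the one corresponding to the feasible vertex, so the two-case vertex analysis collapses into the single compact expression claimed. The main obstacle is purely organizational: keeping the four candidate vertices, their feasibility ranges, and the sign of $\delta$ consistent so that the $\min$ and the $\pm$ align correctly; the algebra itself is routine once the substitutions $a_{00}=1-a_{01}$ and $a_{11}=1-a_{10}$ are in hand.
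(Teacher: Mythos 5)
Your proposal is correct and follows essentially the same route as the paper: both start from the affine relations of Theorem~\ref{theory1} and maximize $\eta$ over $(P_0,P_1)\in[0,1]^2$ subject to $\rho\le\rho_{\text{max}}$, with the sign of $1-a_{01}-a_{10}$ deciding which extreme of $P_1$ is optimal. Your explicit LP/vertex phrasing (and the check that the $\min$ picks out the feasible vertex) is just a geometric restatement of the paper's substitution of the bounds $\max\{0,\frac{\rho-a_{10}}{a_{11}}\}\le P_1\le\min\{1,\frac{\rho}{a_{11}}\}$ into $\eta=\rho+\frac{1-a_{01}-a_{10}}{a_{10}}(\rho-P_1)$.
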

\begin{proof}
Eliminating $1-P_{0}$ from (\ref{SU3}) and (\ref{IR3}) yields
\begin{align}
\eta=\rho+\frac{1-a_{01}-a_{10}}{a_{10}}(\rho-P_{1}).\label{etaofrho}
\end{align}
The feasible range of $P_{1}$  can be calculated from (\ref{IR3}), $0\leq P_{0}\leq 1$ and $0\leq P_{1}\leq 1$ as $\max\{0,\frac{\rho-a_{10}}{a_{11}}\}\leq P_{1} \leq \min\{1,\frac{\rho}{a_{11}}\}$. If $a_{01}+a_{10}\leq 1$, then $\eta$ can be upperbounded by substituting the lower bound on $P_{1}$ and $\rho\leq \rho_{\text{max}}$ in (\ref{etaofrho}), which yields the first line of (\ref{etaopt}). Similarly, if $a_{01}+a_{10}> 1$, then the second line of (\ref{etaopt}) is obtained from (\ref{etaofrho}) and the upper bound on $P_{1}$.
\end{proof}
\begin{corollary}
$\eta_{\text{max}}\geq \rho_{\text{max}}$.
\end{corollary}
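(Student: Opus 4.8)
The plan is to read off the claim directly from the piecewise formula (\ref{etaopt}) by checking the sign of the correction term added to $\rho_{\text{max}}$ in each of the two cases. In both lines, $\eta_{\text{max}}$ has the form $\rho_{\text{max}}$ plus (or minus) a product of $(1-a_{01}-a_{10})$ with a minimum of two quantities, so it suffices to show that this signed product is nonnegative.

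First I would establish that the minimum term is nonnegative in both cases. Since $\rho_{\text{max}}$ is a probability we have $0\le\rho_{\text{max}}\le 1$, so both $\rho_{\text{max}}$ and $1-\rho_{\text{max}}$ are nonnegative; combined with the standing assumption that all transition probabilities satisfy $a_{i,j}>0$ (so in particular $a_{10}>0$ and $a_{11}>0$), each of the four ratios $\frac{\rho_{\text{max}}}{a_{10}},\frac{1-\rho_{\text{max}}}{a_{11}},\frac{\rho_{\text{max}}}{a_{11}},\frac{1-\rho_{\text{max}}}{a_{10}}$ is nonnegative, and hence so is their minimum in either branch.

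Next I would match the sign of $(1-a_{01}-a_{10})$ to the branch condition. In the first case, the hypothesis $a_{01}+a_{10}\le 1$ gives $1-a_{01}-a_{10}\ge 0$, so the added term is a product of two nonnegative factors and therefore $\eta_{\text{max}}\ge\rho_{\text{max}}$. In the second case, the hypothesis $a_{01}+a_{10}>1$ gives $1-a_{01}-a_{10}<0$; the explicit minus sign in the second line of (\ref{etaopt}) then turns the subtracted term into $-(1-a_{01}-a_{10})\min\{\cdots\}=(a_{01}+a_{10}-1)\min\{\cdots\}\ge 0$, so again $\eta_{\text{max}}\ge\rho_{\text{max}}$.

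There is no real obstacle here: the statement is a direct sign analysis of (\ref{etaopt}), and the only points requiring care are the harmless bookkeeping facts that $\rho_{\text{max}}\le 1$ (so that $1-\rho_{\text{max}}\ge 0$) and that the transition probabilities are strictly positive (so the denominators are well defined and the ratios nonnegative). Both are guaranteed by the model assumptions, so the corollary follows immediately in both branches.
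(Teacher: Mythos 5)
Your argument is correct and is exactly the intended (and only natural) derivation: the paper states this corollary without proof as an immediate consequence of (\ref{etaopt}), and your sign analysis---nonnegative ratios because $0\le\rho_{\text{max}}\le 1$ and $a_{i,j}>0$, with the sign of $1-a_{01}-a_{10}$ matched to the branch condition in each line---is precisely the bookkeeping the authors left implicit. Nothing is missing.
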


\section{Energy detection as baseline CR strategy}\label{energysec}
\indent Energy detection, which is one of the most widely deployed spectrum sensing methods because of its simplicity, compares the estimated received energy ($y_k$) with a threshold to detect the existence or absence of the PU signal. Using this threshold at a certain received PU signal power to CR signal-to-noise ratio (SNR) will result in certain probabilities of mis-detection and false alarm. This procedure is modelled in the HMM presented in Fig. \ref{base}. In this model, $\hat{q}_{k}=0$ and $\hat{q}_{k}=1$ denote the detected state to be $H_0$ and $H_1$, respectively, and thus $\hat{q}_{k}=0$~ if $y_k\leq \theta_e$ or $\hat{q}_{k}=1$~ if $y_k> \theta_e$,
where $\theta_e$ is detection threshold. Thus, $P_{\text{FA}}$ and $P_{\text{M}}$ are
\begin{align}
P_{\text{FA}}&=1-\mathcal{F}_{y_k|q_k}(\theta_e|0)=1-\frac{\gamma(K,\theta_e/\sigma^2_0)}{\Gamma(K)},\label{b00}\\
P_{\text{M}}&=\mathcal{F}_{y_k|q_k}(\theta_e|1)=\frac{\gamma(K,\theta_e/\sigma^2_1)}{\Gamma(K)},\label{b11}
\end{align}
where $\Gamma$ is the Gamma function, $\gamma$ is the lower incomplete Gamma function, $\mathcal{F}_{y_k|q_k}(\cdot|0)$ and $\mathcal{F}_{y_k|q_k}(\cdot|1)$ are the cumulative distribution function (CDF) of a chi-square distribution with $2K$ degrees of freedom and Gaussian variance $\sigma^2_0/2$ and $\sigma^2_1/2$, respectively.\\

\indent We will use $u_{k+1}=\overline{\hat{q}_k}$ as the baseline transmission strategy. The threshold $\theta_e$, that maximizes UR, is calculated by recalling that $P_0=P_{\text{FA}}$, $P_1=P_{\text{M}}$ and combining expressions (\ref{IR3}), (\ref{b00}) and (\ref{b11}), substituting $\rho=\rho_{\text{max}}$ and solving them for $\theta_e$.
\begin{figure}[t]
  \centering
  \psfrag{p}[][][2]{$a_{01}$}
  \psfrag{q}[][][2]{$a_{10}$}
  \psfrag{p2}[][][2]{$a_{00}$}
  \psfrag{q2}[][][2]{$a_{11}$}
  \psfrag{h0}[][][2]{$q_k=0$}
  \psfrag{h1}[][][2]{$q_k=1$}
  \psfrag{a}[][][2]{$1-P_{\text{FA}}$}
  \psfrag{b}[][][2]{$1-P_{\text{M}}$}
  \psfrag{a2}[][][2]{$P_{\text{FA}}$}
  \psfrag{b2}[][][2]{$P_{\text{M}}$}
  \psfrag{v0}[][][2]{$\hat{q}_{k}=0$}
  \psfrag{v1}[][][2]{$\hat{q}_{k}=1$}
  \scalebox{.4}{\includegraphics{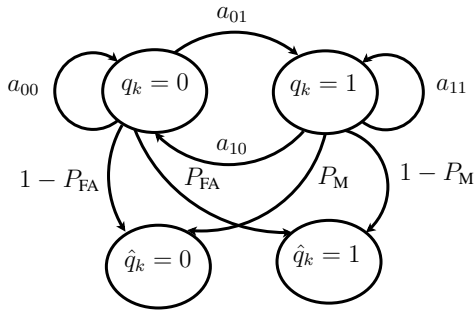}}
  \caption{HMM model for the energy detector.}
 \label{base}
\end{figure}

\section{A-posteriori probabilities LLR based cognitive radio}\label{appsec}
\indent One reasonable way to incorporate both the model and the entire observation is to form the a-posterior probability of $\Pr\{q_{k+1}=1|\mathbf{y}_k\}$. This probability will be used in the decision rule as
\begin{align}
u_{k+1}=           \begin{cases}
             1, & \text{if $z_k\leq \theta_{\text{LLR}}$} \label{CRTX}\\
             0, & \text{if $z_k>\theta_{\text{LLR}}$}
           \end{cases},
    \end{align}
where $z_k\triangleq\log~\frac{\Pr\{q_{k+1}=1|\mathbf{y}_k\}}{\Pr\{q_{k+1}=0|\mathbf{y}_k\}}$ and $\theta_{\text{LLR}}$ are the \emph{a posteriori log-likelihood ratio} and the threshold for $z_k$, respectively. The $z_k$, which is used for estimating the future state of PU, hereafter will be addressed as the LLR.
%
Thus, with the same method explained in \cite[eqs. 18--19]{Haghighi20112}, the LLR as a function of the forward variables $\alpha_k(j)\triangleq\Pr\{q_k=j,\mathbf{y}_k\},~~j\in\{0,1\}$, which are computed recursively\cite[eqs. 19--21]{Rabinner1989} with moderate complexity, is derived as
\begin{align}
z_k=\log \frac{a_{01}\alpha_k(0)+a_{11}\alpha_k(1)}{a_{00}\alpha_k(0)+a_{10}\alpha_k(1)}\label{LLR2}.
\end{align}
\indent In our previous paper\cite{Haghighi20112} the forward variables were calculated based on the discrete output HMM. However, the forward variables can be calculated based on the continuous-output HMM presented in Fig. \ref{xhmm_fix}. There are several benefits in doing the latter. The baseline method in Section~\ref{energysec} needs a threshold to be calculated while the continuous model does not need such a threshold. This thresholding might reduce the information available in the samples from the continuous-output HMM.
Since both $\rho_l(\theta_{\text{LLR}})$ and $\eta_l(\theta_{\text{LLR}})$ are nondecreasing functions of $\theta_{\text{LLR}}$, it follows that the optimum threshold, which does not cause more interference than the allowed $\rho_{\text{max}}$ and maximizes the UR, is found from (\ref{IR1}) as
   \begin{align}
 	\theta_{\text{LLR}}= \mathcal{F}^{-1}_{z_k|q_{k+1}}(\rho_{\text{max}}|1),\label{appllrthresh}
    \end{align}
where $ \mathcal{F}^{-1}_{z_k|q_{k+1}}(\cdot|1)$ is the inverse CDF of $z_k$ conditioned on $q_{k+1}=1$.\\
\indent In the case that the PU transition matrix in (\ref{Amat}) is time-variant, semi-Markov models can be used instead of the model in Fig. \ref{ghmm}. For hidden semi-Markov models, forward variables can be calculated~\cite{Yu2003} and thus the same method can be deployed.
\subsection{Optimality of the LLR based cognitive radio}\label{optimality}
\begin{theorem}\label{theory3}
\indent The a-posteriori LLR-based cognitive transmission scheme presented in (\ref{CRTX}) is the optimum strategy in terms of maximizing UR subject to $\rho\leqslant\rho_{\text{max}}$.
\end{theorem}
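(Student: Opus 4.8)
The plan is to recognize Theorem~\ref{theory3} as a most-powerful-test (Neyman--Pearson) problem and to prove it by the classical likelihood-ratio sign argument. A CR strategy based on $\mathbf{y}_k$ is, in full generality, a possibly randomized decision function $\phi(\mathbf{y}_k)\triangleq\Pr\{u_{k+1}=1\mid\mathbf{y}_k\}\in[0,1]$. Writing the conditional density of the observation vector as $p(\mathbf{y}_k\mid q_{k+1}=j)$, the definitions (\ref{IR1}) and (\ref{SU1}) become $\eta=\int\phi(\mathbf{y})\,p(\mathbf{y}\mid q_{k+1}=0)\,d\mathbf{y}$ and $\rho=\int\phi(\mathbf{y})\,p(\mathbf{y}\mid q_{k+1}=1)\,d\mathbf{y}$. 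Thus "maximize $\eta$ subject to $\rho\leq\rho_{\text{max}}$" is exactly a most-powerful-test problem in which $q_{k+1}=0$ plays the role of the alternative, whose detection probability $\eta$ we maximize, and $q_{k+1}=1$ plays the role of the null, whose size $\rho$ we cap at $\rho_{\text{max}}$.

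First I would make the link between the LLR rule and the likelihood ratio explicit. By Bayes' rule and the steady-state priors (\ref{pi}), $z_k=\log\frac{p(\mathbf{y}_k\mid q_{k+1}=1)}{p(\mathbf{y}_k\mid q_{k+1}=0)}+\log\frac{\pi_1}{\pi_0}$, so $z_k$ is a strictly monotone function of the likelihood ratio $\Lambda(\mathbf{y}_k)\triangleq\frac{p(\mathbf{y}_k\mid q_{k+1}=0)}{p(\mathbf{y}_k\mid q_{k+1}=1)}$; indeed $z_k=-\log\Lambda(\mathbf{y}_k)+\log\frac{\pi_1}{\pi_0}$. Consequently the rule (\ref{CRTX}), which sets $u_{k+1}=1$ iff $z_k\leq\theta_{\text{LLR}}$, is identical to the likelihood-ratio test that transmits iff $\Lambda(\mathbf{y}_k)\geq\lambda$, with $\lambda\triangleq\frac{\pi_1}{\pi_0}e^{-\theta_{\text{LLR}}}>0$. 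Denote this rule by $\phi^\star$ and its metrics by $\eta^\star,\rho^\star$; by the choice (\ref{appllrthresh}) of $\theta_{\text{LLR}}$ we have $\rho^\star=\rho_{\text{max}}$.

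Next I would run the standard sign argument. For any competing strategy $\phi$ with $\rho\leq\rho_{\text{max}}$, consider the product $\big(\phi^\star(\mathbf{y})-\phi(\mathbf{y})\big)\big(p(\mathbf{y}\mid q_{k+1}=0)-\lambda\,p(\mathbf{y}\mid q_{k+1}=1)\big)$. On the set $\{\Lambda\geq\lambda\}$ the first factor is $\geq0$ (since $\phi^\star=1$) and the second is $\geq0$; on $\{\Lambda<\lambda\}$ both factors are $\leq0$; hence the product is pointwise nonnegative. Integrating over $\mathbf{y}$ and using the integral expressions for $\eta$ and $\rho$ gives $(\eta^\star-\eta)-\lambda(\rho^\star-\rho)\geq0$, i.e. $\eta^\star-\eta\geq\lambda(\rho_{\text{max}}-\rho)\geq0$ because $\lambda>0$ and $\rho\leq\rho_{\text{max}}$. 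Therefore $\eta\leq\eta^\star$ for every admissible strategy, which is the claimed optimality.

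The main obstacle is the boundary/feasibility step rather than the sign argument: the deterministic threshold rule (\ref{CRTX}) attains the constraint with equality, $\rho^\star=\rho_{\text{max}}$, only if the conditional CDF $\mathcal{F}_{z_k\mid q_{k+1}}(\cdot\mid1)$ is continuous, so that (\ref{appllrthresh}) is well defined and no atom at the threshold forces a randomized tie-break. I would justify continuity by noting that in the Gaussian signal-and-noise model of Section~\ref{signoise} the forward variables $\alpha_k(j)$, and hence $z_k$, are continuous functions of the continuously distributed energies $\mathbf{y}_k$, so $z_k\mid q_{k+1}=1$ has a continuous distribution and the inverse CDF in (\ref{appllrthresh}) yields exactly $\rho^\star=\rho_{\text{max}}$. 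Were an atom present at the threshold, the identical conclusion would follow after a randomized tie-break there, precisely as in the general Neyman--Pearson lemma.
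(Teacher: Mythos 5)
Your proof is correct, and it proves the same Neyman--Pearson-type statement the paper proves, but by the other standard route. The paper fixes a competing deterministic transmit region $R_{\text{A}}$, decomposes $R_{\text{A}}$ and $R_{\text{LLR}}$ into the disjoint pieces $R_{\text{A}}\cap R_{\text{LLR}}$, $R_{\text{A}}\cap R_{\text{LLR}}^c$, $R_{\text{A}}^c\cap R_{\text{LLR}}$, and bounds the two ``disagreement'' probabilities against each other using the fact that the likelihood ratio is below (resp.\ above) $\theta'_{\text{LLR}}$ inside (resp.\ outside) $R_{\text{LLR}}$. You instead integrate the pointwise-nonnegative product $\bigl(\phi^\star(\mathbf{y})-\phi(\mathbf{y})\bigr)\bigl(p(\mathbf{y}\mid q_{k+1}=0)-\lambda\,p(\mathbf{y}\mid q_{k+1}=1)\bigr)$, which collapses the same case analysis into one line and, more importantly, is stated for arbitrary randomized strategies $\phi\in[0,1]$; the paper's set-based argument only compares against deterministic region rules, so your version establishes optimality over a strictly larger class. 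Your translation of the a-posteriori LLR threshold into a likelihood-ratio threshold via $\theta'=\tfrac{\pi_0}{\pi_1}e^{\theta_{\text{LLR}}}$ matches the paper's identity exactly. You also make explicit a point the paper silently assumes: that the deterministic rule~(\ref{CRTX}) with threshold~(\ref{appllrthresh}) actually attains $\rho^\star=\rho_{\text{max}}$, which requires continuity of $\mathcal{F}_{z_k\mid q_{k+1}}(\cdot\mid 1)$ (or a randomized tie-break at an atom); the paper simply asserts $\rho_{\text{LLR}}=\rho_{\text{max}}$. The only caveat is that your continuity justification is heuristic (continuity of $z_k$ as a function of $\mathbf{y}_k$ does not by itself rule out atoms in the distribution of $z_k$), but since you also note that a randomized tie-break repairs any atom, the argument is complete either way.
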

\begin{IEEEproof}
\indent The proof is inspired from the proof of the Neyman-Pearson Lemma \cite{neyman1933}. To prove the theorem, it should be shown that for any other strategy $A$, which has $\eta_{\text{A}}$ and $\rho_{\text{A}}\leq\rho_{\text{max}}$, the LLR-based strategy has higher UR $\eta_{\text{LLR}}\geq\eta_{\text{A}}$ with the condition on $\rho_{\text{LLR}}=\rho_{\text{max}}$. The set of observations $\mathbf{Y}_k$ for which the CR decides to transmit is denoted by $R$. Thus, for LLR-based strategy set $R_{\text{LLR}}$ is defined as
\begin{align}
 	R_{\text{LLR}}&\triangleq\left\{\mathbf{y}\in \mathbb{R}^k: \log~\frac{\Pr\{q_{k+1}=1|\mathbf{Y}_k=\mathbf{y}\}}{\Pr\{q_{k+1}=0|\mathbf{Y}_k=\mathbf{y}\}}\leq\theta_{\text{LLR}}\right\}=\left\{\mathbf{y}\in \mathbb{R}^k: \frac{f_{\mathbf{Y}_k|q_{k+1}}(\mathbf{y}|1)}{f_{\mathbf{Y}_k|q_{k+1}}(\mathbf{y}|0)}\leq\theta'_{\text{LLR}}\right\},\notag\\
	\theta'_{\text{LLR}}&=\frac{\pi_0}{\pi_1}e^{\theta_{\text{LLR}}}\notag
	\end{align}
where $f_{\mathbf{Y}_k|q_{k+1}}$ is the distribution of observations given next PU state. The IR and UR can be written as
\begin{align}
\rho=\Pr\left\{\mathbf{Y}_k\in R|q_{k+1}=1\right\}=\int_R f_{\mathbf{Y}_k|q_{k+1}}(\mathbf{y}|1)d\mathbf{y},~
\eta=\Pr\left\{\mathbf{Y}_k\in R|q_{k+1}=0\right\}=\int_R f_{\mathbf{Y}_k|q_{k+1}}(\mathbf{y}|0)d\mathbf{y}.
\end{align}
\indent From law of total probability it can be shown that
\begin{align}
R_{\text{A}}=(R_{\text{A}}\cap R_{\text{LLR}})\cup(R_{\text{A}}\cap R_{\text{LLR}}^c),~R_{\text{LLR}}=(R_{\text{A}}\cap R_{\text{LLR}})\cup(R_{\text{A}}^c\cap R_{\text{LLR}}),
\end{align}
where $R^c$ denotes the complement set of $R$.
 Since the components of the union are disjoint events, the probability that an observation belongs to a set can be written as the sum of the components. Thus, to show that $\eta_{\text{LLR}}\geq\eta_{\text{A}}$, it is enough to show that $\Pr\left\{\mathbf{Y}_k\in R_{\text{A}}^c\cap R_{\text{LLR}}|q_{k+1}=0\right\}\geq\Pr\left\{\mathbf{Y}_k\in R_{\text{A}}\cap R_{\text{LLR}}^c|q_{k+1}=0\right\}$. To prove the theorem, starting from the left side, it can be written
\begin{align}
&\Pr\left\{\mathbf{Y}_k\in R_{\text{A}}^c\cap R_{\text{LLR}}|q_{k+1}=0\right\}=\int_{R_{\text{A}}^c\cap R_{\text{LLR}}} f_{\mathbf{Y}_k|q_{k+1}}(\mathbf{y}|0)d\mathbf{y}\geq \frac{1}{\theta'_{\text{LLR}}}\int_{R_{\text{A}}^c\cap R_{\text{LLR}}}f_{\mathbf{Y}_k|q_{k+1}}(\mathbf{y}|1)d\mathbf{y}\notag\\
&=\frac{1}{\theta'_{\text{LLR}}} \Pr\left\{\mathbf{Y}_k\in R_{\text{A}}^c\cap R_{\text{LLR}}|q_{k+1}=1\right\}=\frac{\rho_{\text{LLR}}-\rho'}{\theta'_{\text{LLR}}}=\frac{\rho_{\text{max}}-\rho'}{\theta'_{\text{LLR}}}
\geq\frac{\rho_{\text{A}}-\rho'}{\theta'_{\text{LLR}}}\notag\\
&=\frac{1}{\theta'_{\text{LLR}}}\Pr\left\{\mathbf{Y}_k\in R_{\text{A}}\cap R_{\text{LLR}}^c|q_{k+1}=1\right\}=\frac{1}{\theta'_{\text{LLR}}}\int_{R_{\text{A}}\cap R_{\text{LLR}}^c}f_{\mathbf{Y}_k|q_{k+1}}(\mathbf{y}|1)d\mathbf{y}\notag\\
&\geq \int_{R_{\text{A}}\cap R_{\text{LLR}}^c}f_{\mathbf{Y}_k|q_{k+1}}(\mathbf{y}|0)d\mathbf{y}=\Pr\left\{\mathbf{Y}_k\in R_{\text{A}}\cap R_{\text{LLR}}^c|q_{k+1}=0\right\},\label{comp_ineq}
\end{align}
where $\rho'=\Pr\left\{\mathbf{Y}_k\in R_{\text{A}}\cap R_{\text{LLR}}|q_{k+1}=1\right\}$. The inequality (\ref{comp_ineq}) is true since
\begin{align}
&\mathbf{y}\in R_{\text{A}}\cap R_{\text{LLR}}^c\Rightarrow\mathbf{y}\in R_{\text{LLR}}^c~\Rightarrow \frac{f_{\mathbf{Y}_k|q_{k+1}}(\mathbf{y}|1)}{f_{\mathbf{Y}_k|q_{k+1}}(\mathbf{y}|0)}\geq\theta'_{\text{LLR}}~\Rightarrow \frac{1}{\theta'_{\text{LLR}}}f_{\mathbf{Y}_k|q_{k+1}}(\mathbf{y}|1)d\mathbf{y}
\geq f_{\mathbf{Y}_k|q_{k+1}}(\mathbf{y}|0).\notag
\end{align}
\end{IEEEproof}

\subsection{Implementation issues}
\indent In this section, the limiting assumptions for using the LLR-based method presented earlier are discussed. By carefully looking at the requirements of the LLR-based method, it is apparent that for calculating the LLRs knowledge of the hidden Markov model is required. In both cases of discrete and continuous-output HMM, the transition matrix $\mathbf{A}$ and the SNR are required. This paper assumes that this information is available or estimated beforehand. In \cite[sec. III-C]{Rabinner1989} the Baum-Welch iterative estimation algorithm, which is equivalent to the well-established expectation-modification (EM) method, is demonstrated. This method will be used to estimate the model parameters from the observations. While examining the performance of the Baum-Welch algorithm is beyond the scope of this paper, there exists a vast amount of literature about its convergence and performance.\\
\indent The second and more challenging issue in the LLR-based method lies in the calculation of the threshold in expression (\ref{appllrthresh}). In this expression, there is a need for the knowledge of the PU states (or their estimates) for a certain training period to estimate $\mathcal{F}_{z_{k}|q_{k+1}}(x|1)$. This is normally done sporadically, but since the true states of PU are not known, they have to be estimated. This process can be done for the previous observations; their corresponding PU states can be estimated with the forward-backward algorithm~\cite{Rabinner1989}. Notice that the estimated states of PU might not perfectly corresponds to the actual ones due to the uncertainties in the noise and channel. This will change the empirical CDF and thus the threshold calculated on which it is based. This error in the PU state estimation will depend deeply on the SNR and also on the $\mathbf{A}$ matrix. The big concern with this error is that it might result in possible violation of the maximum allowed IR for the PU ($\rho_{\text{max}}$). However, to have a useful method, robust to changes and reductions in SNR, it is necessary to make sure that it will never violate the IR under any conditions. In low SNRs in which the PU state estimation might be poor, we can directly use unconditional empirical CDF of LLRs which does not need PU state estimation. In Section~\ref{nopustate}, we proved analytically that the threshold which is calculated based on unconditional CDF of LLRs will result in a CR strategy which does not violate IR. 

\section{Threshold calculation without true PU state knowledge}\label{nopustate}
\indent The threshold for CR transmission strategy can be calculated based on the expression (\ref{appllrthresh}). To do so, the actual PU states are needed to estimate the empirical CDF (ECDF) of LLRs conditioned on PU states. This empirical CDF is used for calculating the decision threshold. In this paper, we estimate the PU states with the forward-backward algorithm. Notice that the scenario where the correct PU states are known is not realistic.\\
\indent In this section, we show that, even without knowing the true state of the PU, it is possible to find a threshold that will not harm the PU. To prove the existence of such threshold, it is sufficient to prove that if the threshold is calculated based on the unconditional empirical CDF, the actual IR will not exceed $\rho_{\text{max}}$. This can be shown by proving that the unconditional CDF of LLRs ($\mathcal{F}_{z_{k}}(x)$) is always bigger than the CDF of LLRs conditioned on the next PU state being one ($\forall x; \mathcal{F}_{z_{k}}(x)\geq \mathcal{F}_{z_{k}|q_{k+1}}(x|1)$). This is proved in theorem \ref{thereshold-theory}. As explained in Section \ref{sys_model}, we focus on the case $a_{01}+a_{10}<1$. The main part of this proof is to show that the empirical CDF of the LLRs conditioned on the next PU state being zero is always larger than the CDF of the LLRs conditioned on the next PU state being one ($\forall x; \mathcal{F}_{z_{k}|q_{k+1}}(x|0)\geq \mathcal{F}_{z_{k}|q_{k+1}}(x|1)$), which is proved in the same theorem. To show this, first it is shown that the $\mathcal{F}_{z_{k}|q_{k+1}}(x|0)\geq \mathcal{F}_{z_{k}|q_{k+1}}(x|1)$ is equivalent to show that $\mathcal{F}_{\Lambda_{k}|q_{k+1}}(x|0)\geq \mathcal{F}_{\Lambda_{k}|q_{k+1}}(x|1)$, where $\Lambda_{k}\triangleq\log \frac{\alpha_k(1)}{\alpha_k(0)}$. Now by inserting the expression for calculating the forward variable \cite[eqs. 19--20]{Rabinner1989} the following expression is obtained
\begin{align}
\Lambda_{k}&= \begin{cases}
             \underbrace{\log \frac{\alpha_{k-1}(0)a_{01}+\alpha_{k-1}(1)a_{11}}{\alpha_{k-1}(0)a_{00}+\alpha_{k-1}(1)a_{10}}}_{z_{k-1}}+\underbrace{\log\frac{b_{1}(y_{k})}{b_{0}(y_{k})}}_{B_{k}}=z_{k-1}+ B_{k}, & \text{if $k>1$} \label{lambda_k}\\
             B_{k}, & \text{if $k=1$}
           \end{cases},
\end{align}
where $b_{i}(\cdot)$ is the probability distribution function of a the Chi-square random variable with $2K$ degrees of freedom and original Gaussian variance of $\sigma^2_{i}/2$. Recall that $\sigma^2_{0}$ is the noise variance and $\sigma^2_{1}$ is the signal plus noise variance $\sigma^2_{1}=\sigma^2_{0}+\sigma^2_{s}$.
\begin{lemma}\label{lemma0}
If $\mathcal{F}_{\Lambda_{k}|q_{k+1}}(x|0)\geq \mathcal{F}_{\Lambda_{k}|q_{k+1}}(x|1)~,~\forall x\in \mathbb{R}$ and $a_{01}+a_{10}<1$ then $\mathcal{F}_{z_{k}|q_{k+1}}(x|0)\geq \mathcal{F}_{z_{k}|q_{k+1}}(x|1)$ for all $x$ in the domain of $z_k$.
\end{lemma}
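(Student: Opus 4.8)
The plan is to recognize that $z_k$ is a \emph{deterministic, strictly monotone} function of $\Lambda_k$, so that the stochastic ordering hypothesized for $\Lambda_k$ transfers verbatim to $z_k$. First I would rewrite expression (\ref{LLR2}) by dividing both the numerator and the denominator of the fraction by $\alpha_k(0)$, which yields
\begin{align}
z_k=\log\frac{a_{01}+a_{11}e^{\Lambda_k}}{a_{00}+a_{10}e^{\Lambda_k}}\triangleq g(\Lambda_k),\notag
\end{align}
exhibiting $z_k$ as $g(\Lambda_k)$ for a fixed function $g:\mathbb{R}\to\mathbb{R}$ that does not depend on the observations. Because conditioning on $q_{k+1}$ affects only the law of $\Lambda_k$ and not the map $g$ itself, the conditional CDFs of $z_k$ and of $\Lambda_k$ are linked entirely through $g$.

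The key step is to show that $g$ is strictly increasing precisely when $a_{01}+a_{10}<1$. Setting $t=e^{\lambda}>0$ and differentiating, I would compute
\begin{align}
g'(\lambda)=\frac{a_{11}t}{a_{01}+a_{11}t}-\frac{a_{10}t}{a_{00}+a_{10}t},\notag
\end{align}
and after clearing denominators find that its sign equals that of $t\left(a_{00}a_{11}-a_{01}a_{10}\right)$, the cross terms in $t^2$ cancelling. Using $a_{00}=1-a_{01}$ and $a_{11}=1-a_{10}$, the determinant collapses to $a_{00}a_{11}-a_{01}a_{10}=1-a_{01}-a_{10}$, so $g'>0$ everywhere exactly under the standing assumption of Section \ref{sys_model}. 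This is the crux of the argument: the very inequality $a_{01}+a_{10}<1$ that the model adopts is what forces $g$ to be monotone.

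Given that $g$ is continuous and strictly increasing, it is a bijection from $\mathbb{R}$ onto its range, with a well-defined increasing inverse $g^{-1}$ on that range, which is precisely the domain of $z_k$. Hence for every such $x$ and each state $j\in\{0,1\}$,
\begin{align}
\mathcal{F}_{z_k|q_{k+1}}(x|j)=\Pr\{g(\Lambda_k)\leq x\mid q_{k+1}=j\}=\Pr\{\Lambda_k\leq g^{-1}(x)\mid q_{k+1}=j\}=\mathcal{F}_{\Lambda_k|q_{k+1}}(g^{-1}(x)|j).\notag
\end{align}
Invoking the hypothesis $\mathcal{F}_{\Lambda_k|q_{k+1}}(\cdot|0)\geq\mathcal{F}_{\Lambda_k|q_{k+1}}(\cdot|1)$ at the point $g^{-1}(x)$ then delivers $\mathcal{F}_{z_k|q_{k+1}}(x|0)\geq\mathcal{F}_{z_k|q_{k+1}}(x|1)$, which is the assertion.

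I expect the only genuine obstacle to be the derivative computation and its sign analysis, together with the bookkeeping needed to identify the domain of $z_k$ as the range of $g$ (so that $g^{-1}$ is legitimately applied). Everything downstream is the standard fact that a strictly increasing transformation preserves first-order stochastic dominance, i.e.\ CDF ordering; the hypothesis says $\Lambda_k\mid q_{k+1}=1$ stochastically dominates $\Lambda_k\mid q_{k+1}=0$, and applying the increasing map $g$ preserves that dominance for $z_k$.
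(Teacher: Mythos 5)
Your proof is correct and follows essentially the same route as the paper's: both rewrite $z_k=\log\frac{a_{01}+a_{11}e^{\Lambda_k}}{a_{00}+a_{10}e^{\Lambda_k}}$ as a deterministic function of $\Lambda_k$, show it is strictly increasing precisely because $a_{00}a_{11}-a_{01}a_{10}=1-a_{01}-a_{10}>0$, and transfer the conditional-CDF ordering through the increasing map. The only (immaterial) difference is that you establish monotonicity by a derivative computation, whereas the paper does so by algebraically rewriting the argument of the logarithm as $\frac{a_{11}}{a_{10}}-\frac{1-a_{01}-a_{10}}{a_{10}(a_{00}+a_{10}e^{\Lambda_k})}$.
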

\begin{IEEEproof}
From (\ref{LLR2}), we have
\begin{align}
z_{k}=\log \frac{a_{01}+a_{11}\frac{\alpha_k(1)}{\alpha_k(0)}}{a_{00}+a_{10}\frac{\alpha_k(1)}{\alpha_k(0)}}= \log \frac{a_{01}+a_{11}e^{\Lambda_{k}}}{a_{00}+a_{10}e^{\Lambda_{k}}}=\log \left[\frac{a_{11}}{a_{10}}-\frac{1-a_{01}-a_{10}}{a_{10}(a_{00}+a_{10}e^{\Lambda_{k}})}\right].\label{lambzk}
\end{align}
\indent Since $1-a_{01}-a_{10}>0$, in (\ref{lambzk}), the second term inside the log has a positive nominator and denominator, and exponential is an increasing function of $\Lambda_{k}$.
Thus, $z_{k}$ is a monotonic increasing function of $\Lambda_{k}$. The lemma follows since the CDFs of $\Lambda_{k}$ and $z_k$ will have the same behaviour.
\end{IEEEproof}


\begin{lemma}\label{lemma1}
\indent For $y_{k}$ as defined in Section \ref{signoise} and $B_k$ defined in (\ref{lambda_k}),
$\mathcal{F}_{B_{k}|q_{k}}(x|0)\geq \mathcal{F}_{B_{k}|q_{k}}(x|1)$
for all $k\geq1$ and $x\geq0$.
\end{lemma}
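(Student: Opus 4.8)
The plan is to reduce the stochastic-ordering claim on $B_k$ to an elementary stochastic-ordering claim on the energy statistic $y_k$, exploiting the fact that $B_k$ is a deterministic, monotone function of $y_k$. First I would write down the conditional densities explicitly. By the signal/noise model of Section~\ref{signoise}, given $q_k=i$ the statistic $y_k$ is a sum of $K$ i.i.d.\ exponential energies, so $b_i$ is the Gamma density
\begin{align}
b_i(y)=\frac{y^{K-1}e^{-y/\sigma_i^2}}{\sigma_i^{2K}\,\Gamma(K)},\qquad y\ge 0,\ i\in\{0,1\}.\notag
\end{align}
Substituting this into the definition of $B_k$ in (\ref{lambda_k}) and simplifying, the $y^{K-1}$ and $\Gamma(K)$ factors cancel and one is left with
\begin{align}
B_k=\log\frac{b_1(y_k)}{b_0(y_k)}=2K\log\frac{\sigma_0}{\sigma_1}+\Bigl(\frac{1}{\sigma_0^2}-\frac{1}{\sigma_1^2}\Bigr)y_k.\notag
\end{align}
Because $\sigma_1^2=\sigma_0^2+\sigma_s^2>\sigma_0^2$, the coefficient of $y_k$ is strictly positive, so $B_k=g(y_k)$ for a strictly increasing affine map $g$. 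This is the one observation that does the real work: the chi-square (Gamma) family has a monotone likelihood ratio in the sufficient statistic $y_k$, which is why the log-likelihood $B_k$ collapses to a monotone function of the energy.

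Next I would transfer the ordering through $g$. Since $g$ is strictly increasing, $\{B_k\le x\}=\{y_k\le g^{-1}(x)\}$, and hence $\mathcal{F}_{B_k|q_k}(x|i)=\mathcal{F}_{y_k|q_k}\bigl(g^{-1}(x)\,\big|\,i\bigr)$ for $i\in\{0,1\}$, where the same argument $y=g^{-1}(x)$ appears for both conditioning states. Consequently the desired inequality $\mathcal{F}_{B_k|q_k}(x|0)\ge\mathcal{F}_{B_k|q_k}(x|1)$ is \emph{equivalent} to $\mathcal{F}_{y_k|q_k}(y|0)\ge\mathcal{F}_{y_k|q_k}(y|1)$ for all $y$ in the range of $g^{-1}$, in particular for all $y\ge 0$. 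Note that $B_k$ depends on the observation only through $y_k$ and not on the Markov history, so no subtlety about the chain enters here; the conditioning on $q_k$ acts purely through the variance parameter $\sigma_i^2$.

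Finally I would establish the ordering of $y_k$ directly from the closed-form CDFs already recorded in (\ref{b00}) and (\ref{b11}), namely $\mathcal{F}_{y_k|q_k}(y|0)=\gamma(K,y/\sigma_0^2)/\Gamma(K)$ and $\mathcal{F}_{y_k|q_k}(y|1)=\gamma(K,y/\sigma_1^2)/\Gamma(K)$. Since $\sigma_1^2>\sigma_0^2$ gives $y/\sigma_0^2\ge y/\sigma_1^2$ for every $y\ge 0$, and the lower incomplete Gamma function $\gamma(K,\cdot)$ is increasing in its second argument (its integrand is nonnegative), we get $\gamma(K,y/\sigma_0^2)\ge\gamma(K,y/\sigma_1^2)$ and therefore $\mathcal{F}_{y_k|q_k}(y|0)\ge\mathcal{F}_{y_k|q_k}(y|1)$; combined with the equivalence above this proves the lemma. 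There is no genuine obstacle in this argument—the lemma is elementary once the likelihood ratio is reduced to an affine function of $y_k$—so the only point requiring care is the sign of the slope $1/\sigma_0^2-1/\sigma_1^2$, i.e.\ the hypothesis $\sigma_1^2>\sigma_0^2$, which is exactly what guarantees that $g$ is increasing and hence that the ordering is preserved rather than reversed.
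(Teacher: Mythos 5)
Your proposal is correct and follows essentially the same route as the paper's own proof: write out the conditional chi-square (Gamma) densities, observe that $B_k$ reduces to a strictly increasing affine function of $y_k$ because $\sigma_1^2>\sigma_0^2$, and then transfer the ordering from the monotonicity of the lower incomplete Gamma function $\gamma(K,\cdot)$. The only difference is an immaterial factor-of-two convention in the chi-square scale parameter.
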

\begin{IEEEproof}
\indent Starting from derivation of $B_{k}$, we will have \cite[pp. 370]{kendall1973}
\begin{align}
B_{k}=\log\frac{b_{1}(y_{k})}{b_{0}(y_{k})}=\log \frac{\frac{1}{\sigma^{2K}_1 2^{K}\Gamma(K)}y_{k}^{K-1}e^{-y_{k}/2\sigma^2_1}}{\frac{1}{\sigma^{2K}_0 2^{K}\Gamma(K)}y_{k}^{K-1}e^{-y_{k}/2\sigma^2_0}}=2K\log \frac{\sigma_0}{\sigma_1}+\frac{y_{k}}{2}\big(\frac{\sigma^2_1-\sigma^2_0}{\sigma^2_1 \sigma^2_0}\big),\notag
\end{align}
where  $\Gamma(\cdot)$ represents the Gamma function. Because $\sigma^2_1\geq\sigma^2_0$, $B_{k}$ is a strictly increasing function of $y_{k}$. The lemma now follows because
\begin{align}
\mathcal{F}_{y_{k}|q_{k}}(y|1)=\frac{\int_0^{y/2\sigma^2_1}t^{K-1}e^{-t}dt}{\Gamma(K)}\leq \frac{\int_0^{y/2\sigma^2_0}t^{K-1}e^{-t}dt}{\Gamma(K)}=\mathcal{F}_{y_{k}|q_{k}}(y|0).\notag
\end{align}

\end{IEEEproof}

\begin{lemma}\label{lemmanew}
Let $C_k$ be any stationary random process that conditioned on $q_k$ is independent of $q_{k+1}$. If $a_{01}+a_{10}<1$,  then for any $x$,
\begin{align}
\mathcal{F}_{C_{k}|q_{k}}(x|0)\geq \mathcal{F}_{C_{k}|q_{k}}(x|1)~\Leftrightarrow~\mathcal{F}_{C_{k}|q_{k+1}}(x|0)\geq \mathcal{F}_{C_{k}|q_{k+1}}(x|1).\label{C_ineq}
\end{align}
\end{lemma}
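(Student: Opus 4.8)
The plan is to exploit the conditional-independence structure directly: since $C_k$ given $q_k$ is independent of $q_{k+1}$, the triple $C_k - q_k - q_{k+1}$ forms a Markov chain, so conditioning on the future state $q_{k+1}$ can be re-expressed through the present state $q_k$ by the law of total probability. First I would write, for $j\in\{0,1\}$,
\begin{align}
\mathcal{F}_{C_k|q_{k+1}}(x|j)=\sum_{i\in\{0,1\}}\mathcal{F}_{C_k|q_k}(x|i)\,\Pr\{q_k=i|q_{k+1}=j\},\notag
\end{align}
where the conditional-independence hypothesis is exactly what lets me replace $\mathcal{F}_{C_k|q_k,q_{k+1}}(x|i,j)$ by $\mathcal{F}_{C_k|q_k}(x|i)$ inside the sum. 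This exhibits each future-conditioned CDF as a convex combination of the two present-conditioned CDFs.

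Next I would evaluate the reverse-transition weights $\Pr\{q_k=i|q_{k+1}=j\}$ with Bayes' rule and the stationary distribution (\ref{pi}), using that $\Pr\{q_k=i\}=\Pr\{q_{k+1}=i\}=\pi_i$ in steady state. A short computation gives $\Pr\{q_k=0|q_{k+1}=0\}=a_{00}$, $\Pr\{q_k=1|q_{k+1}=0\}=a_{01}$, $\Pr\{q_k=0|q_{k+1}=1\}=a_{10}$, and $\Pr\{q_k=1|q_{k+1}=1\}=a_{11}$, each pair summing to one as required.

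Substituting these weights and subtracting the two cases collapses everything to a single identity,
\begin{align}
\mathcal{F}_{C_k|q_{k+1}}(x|0)-\mathcal{F}_{C_k|q_{k+1}}(x|1)=(1-a_{01}-a_{10})\left[\mathcal{F}_{C_k|q_k}(x|0)-\mathcal{F}_{C_k|q_k}(x|1)\right],\notag
\end{align}
after using $a_{00}-a_{10}=1-a_{01}-a_{10}$ and $a_{01}-a_{11}=-(1-a_{01}-a_{10})$. Because the assumption $a_{01}+a_{10}<1$ makes the scalar factor strictly positive, the two differences share the same sign for every $x$, which yields the claimed equivalence in both directions simultaneously.

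The calculation itself is short; the \textbf{main obstacle} is forming the reverse conditioning correctly. One must invoke stationarity so that the denominator $\Pr\{q_{k+1}=j\}$ equals $\pi_j$, and one must use the conditional-independence hypothesis to drop $q_{k+1}$ from $\mathcal{F}_{C_k|q_k,q_{k+1}}$; both are essential for the weights to reduce to the entries of $\mathbf{A}$ and for the factor $1-a_{01}-a_{10}$ to emerge with the correct sign, which is where the standing assumption $a_{01}+a_{10}<1$ does all the work.
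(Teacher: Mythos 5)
Your proof is correct and takes essentially the same route as the paper's: both decompose $\mathcal{F}_{C_k|q_{k+1}}(x|j)$ as a convex combination of $\mathcal{F}_{C_k|q_k}(x|0)$ and $\mathcal{F}_{C_k|q_k}(x|1)$ via conditional independence and the stationary distribution, and then use the positivity of $1-a_{01}-a_{10}$. Your packaging of the final step as the single identity with the scalar factor $1-a_{01}-a_{10}$ is a slightly cleaner way to get both directions of the equivalence at once, but the underlying argument is the same.
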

\begin{IEEEproof}
From the conditional independence in the lemma assumption we have $\Pr\{C_k\leq x|q_k=i, q_{k+1}=j\}=\Pr\{C_k\leq x|q_k=i\}$. Now, for $i\in\{0,1\}$ and $j\in \{0,1\}$
\begin{align}
\Pr\{C_k\leq x,q_k&=i,q_{k+1}=j\}=\Pr\{C_k\leq x|q_k=i,q_{k+1}=j\}\Pr\{q_k=i,q_{k+1}=j\}\notag\\
&=\Pr\{C_k\leq x|q_k=i\}\Pr\{q_{k+1}=j|q_k=i\}\Pr\{q_k=i\}=\mathcal{F}_{C_{k}|q_{k}}(x|i)a_{ij}\pi_i\notag.\\
\Pr\{C_{k}\leq x|q_{k+1}=j\}&=\frac{p_{0}a_{0j}\pi_{0}+p_{1}a_{1j}\pi_{1}}{\pi_{j}}.\notag
\end{align}
where $p_{0}=\mathcal{F}_{C_{k}|q_{k}}(x|0)$ and $p_{1}=\mathcal{F}_{C_{k}|q_{k}}(x|1)$. Now since, by assumption, $1-a_{10}-a_{01} = a_{11} - a_{01} = a_{00}-a_{10} > 0$, we have that
\begin{align}
&\mathcal{F}_{C_{k}|q_{k}}(x|0)\geq \mathcal{F}_{C_{k}|q_{k}}(x|1)~\Leftrightarrow ~ p_{0}(a_{00}-a_{10})\geq p_{1}(a_{11}-a_{01})\notag\\
&\Leftrightarrow \frac{p_{0}a_{00}a_{10}+p_{1}a_{10}a_{01}}{a_{10}}\geq\frac{p_{0}a_{01}a_{10}+p_{1}a_{01}a_{11}}{a_{01}}\Leftrightarrow ~\Pr\{C_{k}\leq x|q_{k+1}=0\}\geq\Pr\{C_{k}\leq x|q_{k+1}=1\}\notag.
\end{align}
\end{IEEEproof}
In Lemma \ref{lemma1} it was proved that $\mathcal{F}_{B_{k}|q_{k}}(x|0)\geq \mathcal{F}_{B_{k}|q_{k}}(x|1)$. Also $B_{k}$ conditioned on $q_{k}$ is independent of $q_{k+1}$, which yields the following corollary.
\begin{corollary}\label{bqkp1}
If $a_{01}+a_{10}< 1$ then $\mathcal{F}_{B_{k}|q_{k+1}}(x|0)\geq \mathcal{F}_{B_{k}|q_{k+1}}(x|1)$.
\end{corollary}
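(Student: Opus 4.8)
The plan is to obtain this corollary as an immediate application of Lemma~\ref{lemmanew} with the choice $C_k = B_k$. The only real work is to verify that $B_k$ meets the two hypotheses of that lemma — stationarity, and conditional independence of $q_{k+1}$ given $q_k$ — after which the conclusion drops out by chaining Lemma~\ref{lemma1} with the equivalence (\ref{C_ineq}).

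First I would argue the conditional independence. From (\ref{lambda_k}), $B_k = \log\frac{b_1(y_k)}{b_0(y_k)}$ is a deterministic (indeed strictly monotone) function of the single observation $y_k$ collected in slot $k$. In the HMM of Fig.~\ref{xhmm_fix}, the emission $y_k$ depends only on the current state $q_k$: given $q_k$, the law of $y_k$ is a fixed chi-square distribution (with Gaussian variance $\sigma_0^2/2$ or $\sigma_1^2/2$) and carries no further dependence on the next state. Hence $\Pr\{B_k \le x \mid q_k=i,\, q_{k+1}=j\} = \Pr\{B_k \le x \mid q_k=i\}$, which is precisely the conditional independence required. Stationarity of $B_k$ follows because the conditional law of $y_k$ given $q_k$ is time-invariant and the chain is started in its stationary distribution $\boldsymbol{\pi}$ of (\ref{pi}).

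Next I would invoke Lemma~\ref{lemma1}, which already establishes $\mathcal{F}_{B_k|q_k}(x|0) \ge \mathcal{F}_{B_k|q_k}(x|1)$ for all $x \ge 0$, i.e.\ exactly the left-hand side of (\ref{C_ineq}). With the hypotheses of Lemma~\ref{lemmanew} verified and $a_{01}+a_{10}<1$ by assumption, the forward direction of the equivalence (\ref{C_ineq}) yields $\mathcal{F}_{B_k|q_{k+1}}(x|0) \ge \mathcal{F}_{B_k|q_{k+1}}(x|1)$, which is the claim.

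I do not anticipate a genuine obstacle, since the corollary is a composition of two results already in hand. The only point needing care is the conditional-independence check: the hard part is to justify it from the \emph{emission structure} of the HMM — that the observation in a slot is generated from the current state alone — rather than from any algebraic property of $B_k$ itself. Everything else is a mechanical substitution into Lemma~\ref{lemmanew}.
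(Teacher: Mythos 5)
Your proposal is correct and follows essentially the same route as the paper: the paper derives the corollary by noting that Lemma~\ref{lemma1} supplies $\mathcal{F}_{B_{k}|q_{k}}(x|0)\geq \mathcal{F}_{B_{k}|q_{k}}(x|1)$ and that $B_k$, conditioned on $q_k$, is independent of $q_{k+1}$, so Lemma~\ref{lemmanew} applies with $C_k=B_k$. Your added justification of the conditional independence via the emission structure of the HMM, and of stationarity via the steady-state initialization, only makes explicit what the paper leaves implicit.
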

\begin{lemma}\label{akcond}
If $\mathcal{F}_{\Lambda_{k}|q_{k+1}}(x|0)\geq \mathcal{F}_{\Lambda_{k}|q_{k+1}}(x|1)$ and $a_{01}+a_{10}<1$ then
$\mathcal{F}_{z_k|q_{k}}(x|0)\geq \mathcal{F}_{z_{k}|q_{k}}(x|1)$.
\end{lemma}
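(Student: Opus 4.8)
The plan is to obtain the conclusion in two moves: first switch the conditioning variable from $q_{k+1}$ to $q_k$ while keeping $\Lambda_k$, and then transport the resulting stochastic ordering of $\Lambda_k$ onto $z_k$ through the monotone map that links them. For the first move I would apply Lemma \ref{lemmanew} with $C_k=\Lambda_k$ and read the equivalence (\ref{C_ineq}) from right to left, so that the hypothesis $\mathcal{F}_{\Lambda_k|q_{k+1}}(x|0)\ge \mathcal{F}_{\Lambda_k|q_{k+1}}(x|1)$ delivers $\mathcal{F}_{\Lambda_k|q_k}(x|0)\ge \mathcal{F}_{\Lambda_k|q_k}(x|1)$ for every $x$.

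To invoke Lemma \ref{lemmanew} I must check that $\Lambda_k$ satisfies its hypothesis, i.e.\ that conditioned on $q_k$ it is independent of $q_{k+1}$; this is the step that needs genuine justification. I would argue it from the Markov structure: $\Lambda_k=\log(\alpha_k(1)/\alpha_k(0))$ is a deterministic function of $\mathbf{y}_k=[y_1,\dots,y_k]^T$, and each $y_i$ is generated from $q_i$ only, so $\Lambda_k$ is measurable with respect to the past $(q_1,\dots,q_k,\mathbf{y}_k)$. By the Markov property the future state $q_{k+1}$ depends on that past solely through $q_k$, whence $\Pr\{\Lambda_k\le x\,|\,q_k=i,q_{k+1}=j\}=\Pr\{\Lambda_k\le x\,|\,q_k=i\}$. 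The steady-state assumption on $\boldsymbol{\pi}$ supplies the joint law $\Pr\{q_k=i,q_{k+1}=j\}=\pi_i a_{ij}$ used inside the proof of Lemma \ref{lemmanew}, so the lemma applies without modification.

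For the second move I would reuse the monotonicity already recorded in the proof of Lemma \ref{lemma0}. By (\ref{lambzk}) and $1-a_{01}-a_{10}>0$, the map $g:\Lambda_k\mapsto z_k=\log\frac{a_{01}+a_{11}e^{\Lambda_k}}{a_{00}+a_{10}e^{\Lambda_k}}$ is strictly increasing, hence has a strictly increasing inverse, and therefore $\mathcal{F}_{z_k|q_k}(x|i)=\mathcal{F}_{\Lambda_k|q_k}(g^{-1}(x)|i)$ for $i\in\{0,1\}$ and $x$ in the domain of $z_k$. Applying this to both $i=0$ and $i=1$, the ordering of $\Lambda_k$ obtained above transfers pointwise to $\mathcal{F}_{z_k|q_k}(x|0)\ge \mathcal{F}_{z_k|q_k}(x|1)$, which is exactly the claim; this is the same reasoning as in Lemma \ref{lemma0} with $q_{k+1}$ replaced by $q_k$, so no fresh computation is required. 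The principal obstacle is the conditional-independence check of the second paragraph: the entire conditioning swap in Lemma \ref{lemmanew} rests on $\Lambda_k$ lying in the past and on a clean application of the Markov property, while the remaining steps are routine.
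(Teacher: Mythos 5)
Your proof is correct and uses the same two ingredients as the paper's: the monotone map of (\ref{lambzk}) to pass between $\Lambda_k$ and $z_k$, and Lemma \ref{lemmanew} (read in the direction from $q_{k+1}$-conditioning to $q_k$-conditioning) to swap the conditioning variable. The only difference is that you apply the two steps in the reverse order (swap the conditioning on $\Lambda_k$ first, then map to $z_k$, whereas the paper maps to $z_k$ via Lemma \ref{lemma0} first and then applies Lemma \ref{lemmanew} with $C_k=z_k$); this is immaterial, and your explicit verification of the conditional-independence hypothesis via the Markov property is a welcome detail the paper leaves implicit.
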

\begin{IEEEproof}
From the assumptions made in this lemma and Lemma \ref{lemma0}, $\mathcal{F}_{z_k|q_{k+1}}(x|0)\geq \mathcal{F}_{z_{k}|q_{k+1}}(x|1)$. Now, since the $z_{k}$ fulfils the properties specified for $C_k$ in lemma \ref{lemmanew}, this lemma follows.
\end{IEEEproof}

\begin{lemma}\label{lemmamain}
If $\mathcal{F}_{z_{k}|q_{k}}(x|0)\geq \mathcal{F}_{z_{k}|q_{k}}(x|1)$
and $a_{01}+a_{10}<1$ then
\begin{align}
\mathcal{F}_{z_{k}+B_{k+1}|q_{k+2}}(x|0)\geq \mathcal{F}_{z_{k}+B_{k+1}|q_{k+2}}(x|1).\label{main_ineq}
\end{align}
\end{lemma}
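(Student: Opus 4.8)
The plan is to route the argument through the intermediate conditioning on $q_{k+1}$, exploiting that $z_k+B_{k+1}$ (which equals $\Lambda_{k+1}$ by (\ref{lambda_k})) is a function of the observations $\mathbf{y}_{k+1}=[y_1,\ldots,y_{k+1}]^T$ alone. I would carry this out in three steps: transport the hypothesis from $q_k$ to $q_{k+1}$, combine $z_k$ and $B_{k+1}$ under the common conditioning $q_{k+1}$, and finally transport the resulting ordering from $q_{k+1}$ to $q_{k+2}$.

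First I would apply Lemma \ref{lemmanew} with $C_k=z_k$. Since $z_k$ is a function of $\mathbf{y}_k$, the Markov property makes it conditionally independent of $q_{k+1}$ given $q_k$, so the hypothesis $\mathcal{F}_{z_k|q_k}(x|0)\geq\mathcal{F}_{z_k|q_k}(x|1)$ yields, through the equivalence (\ref{C_ineq}), that $\mathcal{F}_{z_k|q_{k+1}}(x|0)\geq\mathcal{F}_{z_k|q_{k+1}}(x|1)$. Independently, Lemma \ref{lemma1} applied at index $k+1$ supplies $\mathcal{F}_{B_{k+1}|q_{k+1}}(x|0)\geq\mathcal{F}_{B_{k+1}|q_{k+1}}(x|1)$.

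The crux is the second step, where these two marginal orderings are merged into an ordering for the sum under the conditioning $q_{k+1}$. The enabling fact is that, given $q_{k+1}$, the variables $z_k$ and $B_{k+1}$ are conditionally independent: $z_k$ depends on $\mathbf{y}_k$ (hence on $q_1,\ldots,q_k$), whereas $B_{k+1}$ depends only on $y_{k+1}$, whose emission law is pinned down by $q_{k+1}$; factoring the joint conditional distribution over the past states $(q_1,\ldots,q_k)$ and using the HMM emission independence confirms the factorization. I would then invoke the closure of the usual stochastic order under convolution of independent variables: writing $\mathcal{F}_{z_k+B_{k+1}|q_{k+1}}(x|j)=\int \mathcal{F}_{z_k|q_{k+1}}(x-w|j)\,d\mathcal{F}_{B_{k+1}|q_{k+1}}(w|j)$, I bound the integrand using the $z_k$-ordering and then the integrating measure using that $w\mapsto\mathcal{F}_{z_k|q_{k+1}}(x-w|1)$ is nonincreasing together with the $B_{k+1}$-ordering, obtaining $\mathcal{F}_{z_k+B_{k+1}|q_{k+1}}(x|0)\geq\mathcal{F}_{z_k+B_{k+1}|q_{k+1}}(x|1)$. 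I expect this two-stage bounding to be the main obstacle, since it is the one place where both marginal orderings and the conditional independence must be used together.

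For the last step I would apply Lemma \ref{lemmanew} a second time, now with $z_k+B_{k+1}$ playing the role of the process and the pair $(q_{k+1},q_{k+2})$ replacing $(q_k,q_{k+1})$. This is legitimate because $z_k+B_{k+1}$ is a function of $\mathbf{y}_{k+1}$, hence conditionally independent of $q_{k+2}$ given $q_{k+1}$, while time-homogeneity of the chain furnishes the same $a_{ij}$ and $\pi_i$ used to prove Lemma \ref{lemmanew}. The equivalence (\ref{C_ineq}) then upgrades the $q_{k+1}$-conditioned ordering of the second step to the desired $q_{k+2}$-conditioned inequality (\ref{main_ineq}), completing the argument.
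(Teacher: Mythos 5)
Your proof is correct, but it follows a genuinely different route from the paper's. The paper works directly with the three-state joint distribution $\Pr\{z_k+B_{k+1}\leq x, q_k,q_{k+1},q_{k+2}\}$: it marginalizes over $(q_k,q_{k+1})$ to express $\mathcal{F}_{z_k+B_{k+1}|q_{k+2}}(x|j)$ as an explicit four-term combination of convolutions $f_{z_k|q_k}(\cdot|i)\ast\mathcal{F}_{B_{k+1}|q_{k+1}}(\cdot|j)$ weighted by products of transition probabilities, and then obtains the inequality by hand---multiplying $\mathcal{F}_{z_k|q_k}(x|0)\geq\mathcal{F}_{z_k|q_k}(x|1)$ by $1-a_{01}-a_{10}$, convolving, invoking Lemma \ref{lemma1}, and matching coefficients against (\ref{sum0}) and (\ref{sum1}). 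You instead factor the argument into three modular steps: transport the ordering of $z_k$ from conditioning on $q_k$ to conditioning on $q_{k+1}$ via Lemma \ref{lemmanew}; merge it with the $B_{k+1}$ ordering under the common conditioning on $q_{k+1}$, using the conditional independence of $z_k$ and $B_{k+1}$ given $q_{k+1}$ together with closure of the usual stochastic order under convolution of independent variables; and transport the resulting ordering of the sum from $q_{k+1}$ to $q_{k+2}$ by a reindexed second application of Lemma \ref{lemmanew}. Both proofs rest on the same ingredients (the HMM independence structure, the two marginal orderings, and the sign of $1-a_{01}-a_{10}$), and the paper's single chain of convolution inequalities is essentially your argument unrolled; what your version buys is that it localizes where each hypothesis enters ($a_{01}+a_{10}<1$ is used only inside Lemma \ref{lemmanew}) and replaces the coefficient bookkeeping with a standard stochastic-ordering fact, at the price of having to justify the conditional independence given $q_{k+1}$ and the legitimacy of shifting the index in Lemma \ref{lemmanew}---both of which you do correctly, the latter because the chain is stationary so $\Pr\{q_{k+1}=i\}=\pi_i$ and the transition probabilities are unchanged.
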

\begin{IEEEproof}
\indent Starting from Lemma \ref{lemma1} we will have $\mathcal{F}_{B_{k+1}|q_{k+1}}(x|0)\geq \mathcal{F}_{B_{k+1}|q_{k+1}}(x|1)$.  Since the states $q_k$ form a Markov chain, the dependences between $z_k$, $B_{k+1}$, and $q_{k+2}$ are depicted as
\begin{displaymath}
    \xymatrix{
        \cdots \ar[dr] \ar[r] & q_k \ar[d] \ar[r] & q_{k+1} \ar[d] \ar[r] &q_{k+2}\\
          & z_k & B_{k+1}                        }.
\end{displaymath}
\indent Thus, using the chain rule and Markov property, the joint distribution can be written as \cite[pp. 37-38]{henkbook2007}
\begin{align}
\Pr\{z_k+B_{k+1}\leq x, q_k,q_{k+1},q_{k+2}\}=\Pr\{q_k\}\Pr\{q_{k+1}|q_{k}\}\Pr\{q_{k+2}|q_{k+1}\}\Pr\{z_k+B_{k+1}\leq x|q_k,q_{k+1}\}.\label{joint_all}
\end{align}
\indent On the other hand, the CDF of the sum of two independent r.vs $A$ and $B$ can be expressed as \cite[pp. 187--190]{kendall1973}
\begin{align}
\mathcal{F}_{A+B}(x)=\mathcal{F}_{A}(x)\ast f_{B}(x)=\mathcal{F}_{B}(x)\ast f_{A}(x),\label{conv}
\end{align}
where $f_{A}(\cdot)$ is the PDF of $A$ and $\ast$ denotes convolution.\\
\indent Since $z_k$ depends only on $q_k$ and the previous states (and channel noise which is independent of the PU states) and $B_{k+1}$ depends solely on $q_{k+1}$ (and noise), the sum of them conditioned on $q_k,q_{k+1}$ can be written as
\begin{align}
\mathcal{F}_{z_k+B_{k+1}|q_k,q_{k+1}}(x|i,j)
=f_{z_k|q_k,q_{k+1}}(x|i,j)\ast \mathcal{F}_{B_{k+1}|q_k,q_{k+1}}(x|i,j)
=f_{z_k|q_k}(x|i)\ast \mathcal{F}_{B_{k+1}|q_{k+1}}(x|j) .\label{cond_sum}
\end{align}
\indent To derive both sides of the inequality (\ref{main_ineq}), one should marginalize the joint distribution in (\ref{joint_all}) with respect to $q_k$ and $q_{k+1}$ and divide it to the $\Pr\{q_{k+2}=i\},i\in\{0,1\}$. After doing that and plugging (\ref{cond_sum}) in (\ref{joint_all}), for the l.h.s and r.h.s of (\ref{main_ineq}) we will have, respectively
\begin{align}
\mathcal{F}_{z_{k}+B_{k+1}|q_{k+2}}(x|0)&=a_{00}^2\mathcal{A}'_0\ast \mathcal{B}_0+a_{01}a_{10}\mathcal{A}'_0\ast \mathcal{B}_1+a_{01}a_{00}\mathcal{A}'_1\ast \mathcal{B}_0+a_{01}a_{11}\mathcal{A}'_1\ast \mathcal{B}_1,\label{sum0}\\
\mathcal{F}_{z_{k}+B_{k+1}|q_{k+2}}(x|1)&=a_{10}a_{00}\mathcal{A}'_0\ast \mathcal{B}_0+a_{11}a_{10}\mathcal{A}'_0\ast \mathcal{B}_1+a_{10}a_{01}\mathcal{A}'_1\ast \mathcal{B}_0+a_{11}^2\mathcal{A}'_1\ast \mathcal{B}_1,\label{sum1}
\end{align}
where $\mathcal{A}_i=\mathcal{F}_{z_k|q_k}(x|i)$, $\mathcal{B}_i=\mathcal{F}_{B_{k+1}|q_{k+1}}(x|i)$, $\mathcal{A}'_i=f_{z_k|q_k}(x|i)$ and $\mathcal{B}'_i=f_{B_{k+1}|q_{k+1}}(x|i)$.

By multiplying both sides of $\mathcal{A}_0\geq \mathcal{A}_1$ with the positive value $1-a_{01}-a_{10}$ and rearranging it, we obtain $a_{00}\mathcal{A}_{0}+a_{01}\mathcal{A}_{1}\geq a_{10}\mathcal{A}_{0}+a_{11}\mathcal{A}_{1}$. Now if both sides of this inequality are convolved with the positive function $\mathcal{B}'_0$, we arrive at
$
(a_{00}\mathcal{A}_{0}+a_{01}\mathcal{A}_{1})\ast \mathcal{B}'_{0}\geq(a_{10}\mathcal{A}_{0}+a_{11}\mathcal{A}_{1})\ast \mathcal{B}'_{0}$.
Now from (\ref{conv}) we can rewrite it as
 $(a_{00}\mathcal{A}'_{0}+a_{01}\mathcal{A}'_{1})\ast \mathcal{B}_{0}\geq(a_{10}\mathcal{A}'_{0}+a_{11}\mathcal{A}'_{1})\ast \mathcal{B}_{1}$ where the last inequality follows because $a_{10}\mathcal{A}'_{0}+a_{11}\mathcal{A}'_{1}\geq0$ and $\mathcal{B}_{0}\geq \mathcal{B}_{1}$ from Lemma \ref{lemma1}. Finally, after multiplying both sides of previous inequality with the positive value of $1-a_{01}-a_{10}$ we get
\begin{align}
(a_{00}\mathcal{A}'_0+a_{01}\mathcal{A}'_1)(a_{00}-a_{10})\ast \mathcal{B}_0\geq(a_{10}\mathcal{A}'_0+a_{11}\mathcal{A}'_1)(a_{11}-a_{01})\ast \mathcal{B}_1\Rightarrow\notag\\
a_{00}^2\mathcal{A}'_0\ast \mathcal{B}_0+a_{01}a_{10}\mathcal{A}'_0\ast \mathcal{B}_1
+a_{01}a_{00}\mathcal{A}'_1\ast \mathcal{B}_0+a_{01}a_{11}\mathcal{A}'_1\ast \mathcal{B}_1\notag\\
\geq a_{10}a_{00}\mathcal{A}'_0\ast \mathcal{B}_0+a_{11}a_{10}\mathcal{A}'_0\ast \mathcal{B}_1
+a_{10}a_{01}\mathcal{A}'_1\ast \mathcal{B}_0+a_{11}^2\mathcal{A}'_1\ast \mathcal{B}_1,\Rightarrow\notag\\
\mathcal{F}_{z_{k}+B_{k+1}|q_{k+2}}(x|0)\geq \mathcal{F}_{z_{k}+B_{k+1}|q_{k+2}}(x|1),\notag
\end{align}
where the last step follows from (\ref{sum0}) and (\ref{sum1}).
\end{IEEEproof}

\begin{theorem}\label{thereshold-theory}
\indent If $\theta'=\mathcal{F}_{z_{k}}^{-1}(\rho_{\text{max}})$ and  $a_{01}+a_{10}< 1$, then $\mathcal{F}_{z_{k}|q_{k+1}}(\theta'|1)\leq \rho_{\text{max}}$.
\end{theorem}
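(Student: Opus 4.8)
The plan is to prove the theorem in two stages. First I would establish the pointwise stochastic dominance $\mathcal{F}_{z_{k}|q_{k+1}}(x|0)\geq \mathcal{F}_{z_{k}|q_{k+1}}(x|1)$ for every $x$, which is precisely the statement the preceding lemmas have been assembled to support. Second, I would combine this dominance with the law of total probability to bound the unconditional CDF $\mathcal{F}_{z_k}$ from below by the conditional-on-$q_{k+1}=1$ CDF, and then evaluate both at the threshold $\theta'$.

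For the first stage the natural device is an induction on $k$ driven by the recursion $\Lambda_{k+1}=z_k+B_{k+1}$ read off from (\ref{lambda_k}). The induction hypothesis at index $k$ is $\mathcal{F}_{\Lambda_{k}|q_{k+1}}(x|0)\geq \mathcal{F}_{\Lambda_{k}|q_{k+1}}(x|1)$ for all $x$. The base case $k=1$ is immediate, since $\Lambda_1=B_1$ and Corollary \ref{bqkp1} gives $\mathcal{F}_{B_{1}|q_{2}}(x|0)\geq \mathcal{F}_{B_{1}|q_{2}}(x|1)$. For the inductive step I would first apply Lemma \ref{akcond} to the hypothesis to obtain $\mathcal{F}_{z_{k}|q_{k}}(x|0)\geq \mathcal{F}_{z_{k}|q_{k}}(x|1)$, and then feed this conclusion into Lemma \ref{lemmamain} to get $\mathcal{F}_{z_{k}+B_{k+1}|q_{k+2}}(x|0)\geq \mathcal{F}_{z_{k}+B_{k+1}|q_{k+2}}(x|1)$, which is exactly the hypothesis at index $k+1$. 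Closing the induction and invoking Lemma \ref{lemma0} then yields $\mathcal{F}_{z_{k}|q_{k+1}}(x|0)\geq \mathcal{F}_{z_{k}|q_{k+1}}(x|1)$ for all $x$, as required.

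For the second stage I would write the unconditional CDF as the mixture over the steady-state law (\ref{pi}) of $q_{k+1}$,
\[
\mathcal{F}_{z_{k}}(x)=\pi_0\,\mathcal{F}_{z_{k}|q_{k+1}}(x|0)+\pi_1\,\mathcal{F}_{z_{k}|q_{k+1}}(x|1).
\]
Replacing the $q_{k+1}=0$ term by the no-larger $q_{k+1}=1$ term, which is legitimate by the dominance just proved together with $\pi_0+\pi_1=1$, gives $\mathcal{F}_{z_{k}}(x)\geq \mathcal{F}_{z_{k}|q_{k+1}}(x|1)$ for every $x$. Evaluating at $x=\theta'$ and using $\mathcal{F}_{z_{k}}(\theta')=\rho_{\text{max}}$ by the definition $\theta'=\mathcal{F}_{z_{k}}^{-1}(\rho_{\text{max}})$ then yields $\mathcal{F}_{z_{k}|q_{k+1}}(\theta'|1)\leq \rho_{\text{max}}$. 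Since, by the decision rule (\ref{CRTX}) and the definition (\ref{IR1}), the interference ratio attained with this threshold is exactly $\mathcal{F}_{z_{k}|q_{k+1}}(\theta'|1)$, the bound certifies that the unconditionally chosen threshold never violates the interference constraint.

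The main obstacle is concentrated entirely in the first stage: the conditional dominance is not elementary because $z_k$ depends on the whole observation history, so the constituent lemmas must be chained in exactly the right order. The delicate points are (i) tracking whether each CDF is conditioned on the current state $q_k$ or the next state $q_{k+1}$, a switch carried out by Lemma \ref{lemmanew} and Corollary \ref{bqkp1} and crucially relying on $a_{01}+a_{10}<1$, and (ii) the convolution identity and sign bookkeeping inside Lemma \ref{lemmamain} that propagate dominance through the additive recursion $\Lambda_{k+1}=z_k+B_{k+1}$. Once those lemmas are in place, the induction and the concluding mixture argument are routine.
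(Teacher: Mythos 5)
Your proposal is correct and follows essentially the same route as the paper's proof: the same induction on $k$ via $\Lambda_{k+1}=z_k+B_{k+1}$, with the base case from Corollary \ref{bqkp1}, the inductive step chaining Lemma \ref{akcond} into Lemma \ref{lemmamain}, the conversion to $z_k$ via Lemma \ref{lemma0}, and the concluding mixture bound $\mathcal{F}_{z_k}(\theta')\geq\mathcal{F}_{z_k|q_{k+1}}(\theta'|1)$. No gaps.
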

\begin{IEEEproof}
\indent From Lemma {\ref{lemma0}} $\mathcal{F}_{z_{k}|q_{k+1}}(x|0)\geq \mathcal{F}_{z_{k}|q_{k+1}}(x|1)$ is the same as proving that $\mathcal{F}_{\Lambda_{k}|q_{k+1}}(x|0)\geq \mathcal{F}_{\Lambda_{k}|q_{k+1}}(x|1)$. To do so, induction is used. First, $\mathcal{F}_{\Lambda_{1}|q_{2}}(x|0)\geq \mathcal{F}_{\Lambda_{1}|q_{2}}(x|1)$ for all $x$ by (\ref{lambda_k}) and corollary \ref{bqkp1}. Second, Lemma \ref{akcond} and Lemma \ref{lemmamain} show that if $\mathcal{F}_{\Lambda_{k}|q_{k+1}}(x|0)\geq \mathcal{F}_{\Lambda_{k}|q_{k+1}}(x|1)$ for any $k\geq 1$ and any $x$ then $\mathcal{F}_{\Lambda_{k+1}|q_{k+2}}(x|0)\geq \mathcal{F}_{\Lambda_{k+1}|q_{k+2}}(x|1)$, which completes the induction. Hence $\mathcal{F}_{z_{k}|q_{k+1}}(x|0)\geq \mathcal{F}_{z_{k}|q_{k+1}}(x|1)$ for any $k\geq 1$ and any $x$. Now from the assumption about $\rho_{\text{max}}$
\begin{align}
\rho_{\text{max}}&=\mathcal{F}_{z_k}(\theta')=\pi_0\mathcal{F}_{z_k|q_{k+1}}(\theta'|0)+\pi_1\mathcal{F}_{z_k|q_{k+1}}(\theta'|1)\notag\\
&\geq\pi_0\mathcal{F}_{z_k|q_{k+1}}(\theta'|1)+\pi_1\mathcal{F}_{z_k|q_{k+1}}(\theta'|1)=\mathcal{F}_{z_{k}|q_{k+1}}(\theta'|1).\notag
\end{align}
\end{IEEEproof}
\begin{corollary}
If $a_{01}+a_{10}< 1$ then for LLR-based CR strategy $\eta\geq \rho$.
\end{corollary}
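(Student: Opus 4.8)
The plan is to reduce the claim $\eta\geq\rho$ to a stochastic-dominance inequality that has essentially already been established in the proof of Theorem \ref{thereshold-theory}. First I would subtract (\ref{IR3}) from (\ref{SU3}) and exploit the row-sum constraints in (\ref{Amat}), which give the two identities $a_{01}-a_{11}=-(1-a_{01}-a_{10})$ and $a_{00}-a_{10}=1-a_{01}-a_{10}$. Substituting these collapses the difference into the single compact expression
\begin{align}
\eta-\rho=(1-a_{01}-a_{10})(1-P_0-P_1).\notag
\end{align}
Because the hypothesis $a_{01}+a_{10}<1$ makes the prefactor strictly positive, the corollary is equivalent to the scalar inequality $P_0+P_1\leq 1$, which removes all of the chain structure from the problem.

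Next I would translate $P_0$ and $P_1$ into conditional CDF values of the LLR evaluated at the decision threshold. Under the transmission rule (\ref{CRTX}), $u_{k+1}=1$ holds exactly when $z_k\leq\theta_{\text{LLR}}$, so from the definitions $P_0\triangleq\Pr\{u_{k+1}=0|q_k=0\}$ and $P_1\triangleq\Pr\{u_{k+1}=1|q_k=1\}$ I obtain $1-P_0=\mathcal{F}_{z_k|q_k}(\theta_{\text{LLR}}|0)$ and $P_1=\mathcal{F}_{z_k|q_k}(\theta_{\text{LLR}}|1)$. Hence
\begin{align}
1-P_0-P_1=\mathcal{F}_{z_k|q_k}(\theta_{\text{LLR}}|0)-\mathcal{F}_{z_k|q_k}(\theta_{\text{LLR}}|1),\notag
\end{align}
and it remains only to show that this difference is nonnegative at the single point $x=\theta_{\text{LLR}}$.

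Finally I would invoke the dominance $\mathcal{F}_{z_k|q_k}(x|0)\geq\mathcal{F}_{z_k|q_k}(x|1)$, valid for every $x$. This is precisely the conclusion of Lemma \ref{akcond}, whose hypothesis $\mathcal{F}_{\Lambda_k|q_{k+1}}(x|0)\geq\mathcal{F}_{\Lambda_k|q_{k+1}}(x|1)$ is verified by the induction already carried out inside the proof of Theorem \ref{thereshold-theory}. Evaluating the dominance at $x=\theta_{\text{LLR}}$ yields $1-P_0-P_1\geq 0$, i.e.\ $P_0+P_1\leq 1$, and therefore $\eta\geq\rho$.

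I do not expect a serious analytic obstacle, since all the real work lives in the preceding lemmas and the induction. The one point that demands care is bookkeeping the conditioning variable and the direction of the inequalities. The threshold guarantee in Theorem \ref{thereshold-theory} is stated in terms of the $q_{k+1}$-conditioned dominance, whereas the definitions of $P_0$ and $P_1$ are conditioned on $q_k$; the version I need here is therefore the $q_k$-conditioned statement, which is exactly what Lemma \ref{akcond} supplies. I must also confirm that, because the rule (\ref{CRTX}) transmits on the lower tail $z_k\leq\theta_{\text{LLR}}$, the quantities appearing after substitution are genuine CDF values (not tail probabilities), so the inequality points the way the dominance gives it.
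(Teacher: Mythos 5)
Your proof is correct, but it takes a slightly longer route than the one the paper implicitly intends. The paper leaves the corollary unproved because it drops out of Theorem~\ref{thereshold-theory} in one line: by (\ref{IR1}) and (\ref{SU1}), together with the rule (\ref{CRTX}), the quantities $\rho$ and $\eta$ \emph{are} the conditional CDFs $\mathcal{F}_{z_k|q_{k+1}}(\theta_{\text{LLR}}|1)$ and $\mathcal{F}_{z_k|q_{k+1}}(\theta_{\text{LLR}}|0)$, and the induction inside the proof of Theorem~\ref{thereshold-theory} establishes exactly the dominance $\mathcal{F}_{z_{k}|q_{k+1}}(x|0)\geq \mathcal{F}_{z_{k}|q_{k+1}}(x|1)$ for all $x$, so $\eta\geq\rho$ is immediate. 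You instead go through Theorem~\ref{theory1}, derive the clean identity $\eta-\rho=(1-a_{01}-a_{10})(1-P_0-P_1)$, and then need the $q_k$-conditioned dominance from Lemma~\ref{akcond} to conclude $P_0+P_1\leq 1$; your bookkeeping of which state is being conditioned on, and of the fact that the rule transmits on the lower tail so that $1-P_0$ and $P_1$ are genuine CDF values, is all right. Both arguments rest on the same induction, so nothing new is really gained, but your version has the small side benefits of exhibiting the factorized expression for $\eta-\rho$ (which makes transparent that the inequality would reverse if $a_{01}+a_{10}>1$) and of working for an arbitrary threshold without reference to how $\theta_{\text{LLR}}$ was chosen; the paper's route is shorter because it never needs to convert between the $q_k$- and $q_{k+1}$-conditioned statements.
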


\indent Thus, the CR strategy with a threshold found based on the unconditional CDF of all LLRs protects the PU ($\rho\leq\rho_{\text{max}}$). One assumption which has been made in most of the lemmas and Theorem in this section the requirement is to have $a_{01}+a_{10}< 1$. Since in the system model we assumed that the CR slot length is much smaller than the PU slot length, the probability of transition from one state to another will be small. Thus, having $a_{01}+a_{10}< 1$ is not a heavy assumption and can be realized easily in practice.
\section{Performance evaluation and results}\label{results}
\indent We compare the LLR-based strategy with three different methods for calculating the threshold with the classical energy detection based spectrum sensing described in Section \ref{energysec}. For all of these simulations, the same PU Markov model ($\mathbf{A}$) and same level of interference $\rho_{\text{max}}$ is used.\\
\indent The threshold needed for the LLR method is calculated by
replacing $\mathcal{F}_{z_k \mid q_{k+1}}(x\mid 1)$ in
(\ref{appllrthresh}) 
with an empirical (sample) CDF. The empirical CDF is computed from the
set of training data $\mathcal{Z}_T = \{z_1, z_2, \ldots, z_{N_T}\}$,
where $N_T$ is assumed to be large enough such that the empirical CDF
is a close approximation of the corresponding CDF. In this paper, we
compute the empirical CDF from one the following three subsets of
$\mathcal{Z}_T$,
\begin{enumerate}[(i)]
\item $\{z_k \in \mathcal{Z}_T : q_{k+1} = 1\}$, i.e, when
  the PU states are assumed to be known
\item $\{z_k \in \mathcal{Z}_T : \hat q_{k+1} = 1\}$,
  where $\hat q_{k+1}$ is the estimated PU states from the
  forward-backward method
\item $\mathcal{Z}_T$, i.e.,
  the ECDF is a close estimate of the unconditional CDF of $z_k$
\end{enumerate}
Note that method (i) is unrealistic, while (ii) and (iii) are more
practical for calculating the threshold.
 The rest of this section discusses the evaluation setup by which these CRs are assessed. It then presents some results and a comparison.
\subsection{Evaluation setup}
\indent In simulating the performance of a CR transmission strategy, the ratio of received primary signal power (at the CR receiver) to the CR receiver noise power is important. For the sake of  simplicity, we assume one PU link and one CR link. It might be possible to extend it to a case with multiple coordinated PUs and multiple coordinated CRs. Moreover, we define the SNR as $\text{SNR}\triangleq\sigma^2_s/\sigma^2_0$ (in dB). In this simulation, $K$ is selected to be $10$.  This parameter plays a role for the SNR scaling. The other factor which is important in evaluating CRs is the maximum allowable IR $\rho_{\text{max}}$. This parameter is normally decided by regulatory bodies like the FCC. In simulations, $\rho_{\text{max}}$ is chosen to be 10\%. The number of elements in $\mathcal{Z}_T$ is $N_T=5\cdot10^{-6}$. To evaluate the performance another $5\cdot10^{-6}$ slots are simulated.
\subsection{Results}
The UR and IR of the different CRs are plotted versus SNR in Fig.~\ref{CHMM1}
and \ref{CHMM2}. The thresholds for the LLR-methods are computed using the
methods (i), (ii), and (iii) described above.
For simplicity of the
discussion, we assume that all ECDFs are close approximations to the
corresponding CDFs. We recall that method~(i) gives an optimum
threshold (i.e., maximizing UR while keeping IR no larger than
$\rho_{\max}$) and that method~(iii) will give a threshold that
guarantees that IR does not exceed $\rho_{\max}$. For method~(ii), we
have no guarantees for the IR.

As expected, the UR of method~(i) is monotonically increasing with SNR
and will approach the upper bound~(\ref{etaopt}) for high SNRs and $\rho_{\max}$
for low SNRs in both Fig.~\ref{CHMM1} and \ref{CHMM2}. In all cases, the UR of method~(i)
is greater or equal to that of the baseline method. However, the UR
and IR curves for methods~(ii) and (iii) behave quite differently in
Fig.~\ref{CHMM1} and \ref{CHMM2}. We note that one important difference between the
simulation setups is that $\pi_0 < \pi_1$ in Fig.~4 and $\pi_0 >
\pi_1$ in Fig.~5, and this will allow us to explain the behavior of
methods~(ii) and (iii).

Let us start with method (ii), which estimates the PU states using the
forward-backward method in the
training phase. In Fig.~\ref{CHMM1}, the UR is very close to the
optimum UR for all considered SNRs and the IR is not exceeding
$\rho_{\max}$. However, in Fig.~\ref{CHMM2}, the performance is close to optimum
only for high SNRs. For low SNRs, the IR exceeds
$\rho_{\max}$, and the CR is in clear violation of the IR
requirement. To explain the different low-SNR behaviors, we recall that
as the SNR approach 0 (in linear scale), the observation $y_1, \ldots,
y_{N_T}$ becomes irrelevant to the PU state estimation. Indeed, as $\mathrm{SNR} \rightarrow 0$, $\hat{q}_{k+1}$ converges in
probability to 1 if $\pi_1>\pi_0$ and 0 if $\pi_1<\pi_0$, for all
$k=1,2,\ldots,N_T$. This implies that $\{z_k \in \mathcal{Z}_T :
\hat{q}_{k+1}=1 \}$ converges to $\mathcal{Z}_T$ if $\pi_1>\pi_0$ and
$\varnothing$ if $\pi_1<\pi_0$ Hence, if $\pi_1<\pi_0$, which is the
case in Fig.~\ref{CHMM2}, we expect method (ii) to completely fail as the SNR
tends to 0. The numerical results in Fig.~5 further indicates that for
low SNRs, method~(ii) will give a too high threshold, resulting in an
IR violation (we cannot estimate the IR and UR reliably for method (ii) at SNRs below $-10$dB with this simulation length, since the training set then is empty with high probability). Conversely, if $\pi_1 > \pi_0$, method~(ii) will
approach method~(iii) as the SNR approach 0. This implies that for
very low SNRs, method~(ii) will not result in an IR violation and that
the UR will be similar to that of method~(iii). This reasoning is
consistent with the results in Fig.~\ref{CHMM1}.

We can conclude that method (ii) is close to optimum for all SNRs when
$\pi_1$ is significantly larger than $\pi_0$. If $\pi_1$ is
significantly smaller than $\pi_0$, then the method works close to
optimum only for SNRs above a certain critical SNR. Below the critical
SNR, the method leads to IR violations, and the method is therefore
invalid in this regime. Continuing with method (iii), we recall that the threshold for this
method, $\theta$, is such that $\mathcal{F}_{z_k}(\theta)
=\rho_{\max}$ and that
the unconditional CDF can be written as
$
\mathcal{F}_{z_k}(x) =
\mathcal{F}_{z_k\mid q_{k+1}}(x\mid 0)\pi_0
+ \mathcal{F}_{z_k\mid q_{k+1}}(x\mid 1)\pi_1
$.
Hence, if $\pi_1\to 1$ then $\mathcal{F}_{z_k}(x) \to
\mathcal{F}_{z_k\mid q_{k+1}}(x\mid 1)$, which implies that
$\rho_{\max} = \mathcal{F}_{z_k}(\theta) \to \mathcal{F}_{z_k\mid
  q_{k+1}}(\theta\mid 1)$. Now, since $\rho_{\max} = \mathcal{F}_{z_k\mid
  q_{k+1}}(\theta^*\mid 1)$ is satisfied for the optimum threshold,
$\theta^*$, it
follows that the UR of method (iii) will be close to optimum.
 Now, in Fig.~\ref{CHMM1}, $\pi_1 = 0.91$ and there will therefore be a gap between
the UR for method~(iii) and the optimum method. Conversely, if $\pi_0\to 1$ then $\mathcal{F}_{z_k}(x) \to
\mathcal{F}_{z_k\mid q_{k+1}}(x\mid 0)$, which implies that
$\rho_{\max} = \mathcal{F}_{z_k}(\theta) \to \mathcal{F}_{z_k\mid
  q_{k+1}}(\theta\mid 0) = \eta$. Hence, the UR for method~(iii) tends
to $\rho_{\max}$. In Fig.~5, $\pi_0 = 0.91$ and there is therefore a
slight gap between the UR for method~(iii) and $\rho_{\max}$. From this we conclude that method~(iii) works best when $\pi_1$ is
large. For the case when $\pi_0$ is large, the threshold is too
conservative resulting in a large UR penalty. However, the IR is
never violated and method~(iii) is the only practical method that is
valid for low SNR when $\pi_0$ is large.
\begin{figure}[t]
  \centering
  \psfrag{snr}[][][1.7]{SNR (dB)}
  \psfrag{eta}[][][1.7]{$\eta$}
  \psfrag{rho}[][][1.7]{$\rho$}
  \scalebox{.6}{\includegraphics{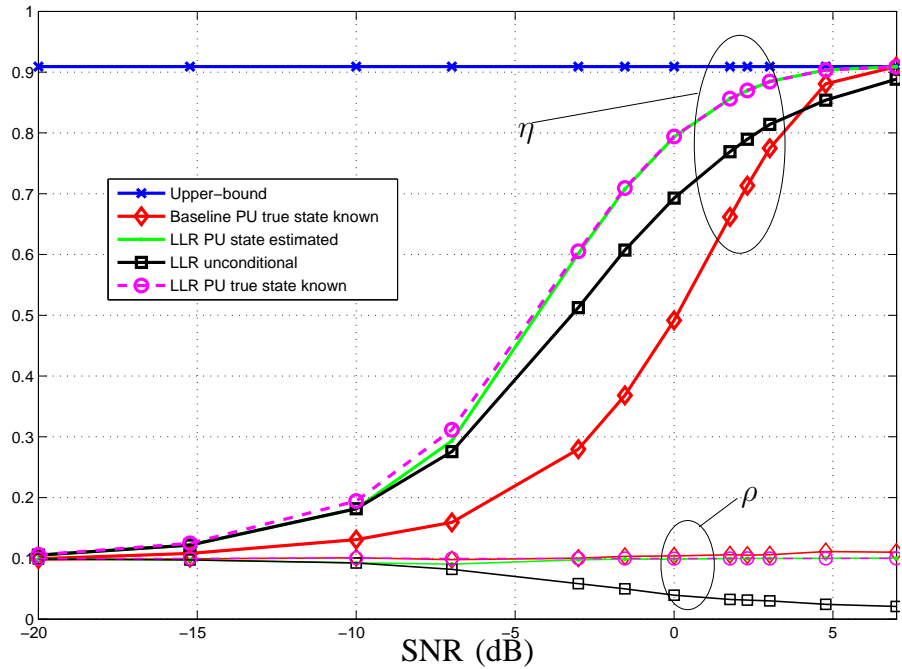}}
  \caption{UR (thick lines) and IR (thin lines) vs. SNR for the baseline CR and corresponding continuous HMM LLR-based CR at $\rho_{\text{max}}=10\%$, $a_{01}=0.1$ and $a_{10}=0.01$}
 \label{CHMM1}
\end{figure}
\begin{figure}[t]
  \centering
  \psfrag{snr}[][][1.2]{SNR (dB)}
  \scalebox{.8}{\includegraphics{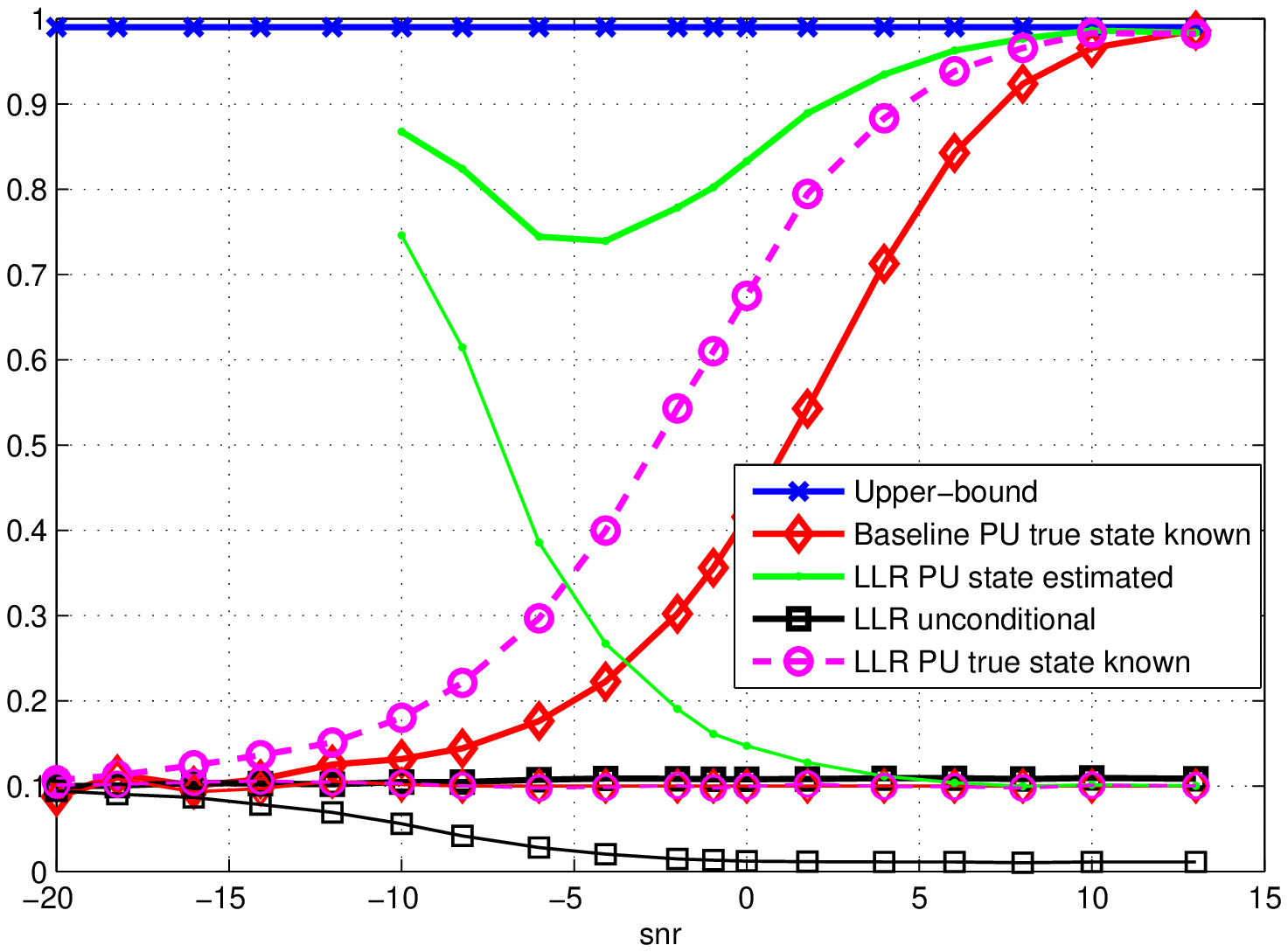}}
  \caption{UR (thick lines) and IR (thin lines) vs. SNR for the baseline CR and corresponding continuous HMM LLR-based CR at $\rho_{\text{max}}=10\%$, $a_{01}=0.01$ and $a_{10}=0.1$}
 \label{CHMM2}
\end{figure}

\section{Conclusion}\label{conc}
\indent In this paper, we have introduced a framework that models the PU, channel, and CR receiver front-end with a simple two-state, continuous-output HMM. The CR transmission strategy can, in general, be viewed as computing a decision variable from the HMM output and comparing the decision variable with a threshold. Hence, to specify a CR transmission strategy, we need only to specify the how to compute the decision variable and how to set the threshold. The performance of a transmission strategy is measured by its UR, under the constraint that the IR does not exceed $\rho_{\max}$. In Theorem \ref{theory2}, we proved an upper bound on the UR, which is a function of the HMM model parameters and $\rho_{\max}$. Theorem~\ref{theory3} states that the optimum decision variable is the APP LLR $z_k$, as defined in~(\ref{LLR2}). The LLRs can be computed from the forward variables, which, in turn, can be computed with moderate complexity\cite{Rabinner1989}. Numerical results show that using the LLR decision variable gives large gains compared to the baseline method, which is based on simple energy detection. The gains are due to the fact that the LLR method make use of all past observations of the PU activity and knowledge of the HMM parameters.\\
\indent It is easy to show that both the UR and the IR are nondecreasing functions of the threshold. Hence, the optimum threshold, i.e., the threshold that maximizes the UR under the constraint that the IR is less or equal to $\rho_{\max}$, is therefore the largest threshold that satisfies the IR constraint with equality. However, to compute the optimum threshold from the CDF of $z_k$ conditioned on that the future PU state $q_{k+1}=1$ is problematic since $q_{k+1}$ is not observable.  The obvious method of (a) estimating the PU states during a training period with the forward backward method, (b) estimating the conditional CDF with an empirical CDF, and (c) finding the threshold from the ECDF and $\rho_{\max}$, is numerically shown to be very close to optimum for all considered SNRs when the PU activity level is high, i.e., when the probability of PU transmission is high. In the opposite situation of a low PU activity level, the method is still close to optimum above a certain SNR, but fails for low SNRs in that the IR exceeds $\rho_{\max}$. A method as the above, but based on the (unconditional) ECDF for $z_k$, obviously avoids the need to estimate the PU states. Furthermore, this method is proven in Theorem \ref{thereshold-theory} to never violate $\rho_{\max}$, regardless of SNR and PU activity levels, but under certain conditions on the PU state transition probabilities, which are argued to be satisfied in practice. Numerical results show that the method works reasonably well when the PU activity level is high. However, the UR is very low compared to the optimum scheme when the PU activity level is low and the SNR is high.\\
\indent In summary, the paper presents practical methods for computing close to optimum thresholds in all cases, except when the SNR and the PU activity level are both low. In the latter case, we can still compute a threshold that respects $\rho_{\max}$, but with a significant loss in UR compared what is achievable with the optimum method. As an example of the former situation with a high PU activity level, our simulation showed of a 116\% UR gain compared to the baseline method at an SNR of $-3$~dB and maximum IR level of 10\%, when the LLR threshold was computed from estimated PU states.
\bibliographystyle{IEEEtran}
\bibliography{IEEEabrv,ref}
\end{document}